\documentclass[11pt,letterpaper]{article}
\usepackage[T1]{fontenc}
\usepackage[utf8]{inputenc}
\usepackage{lmodern}
\usepackage[margin=1in]{geometry}
\usepackage{amsmath,amssymb,amsthm,mathtools}
\usepackage{microtype}
\usepackage{xcolor}
\definecolor{linkblue}{RGB}{0,51,102}
\usepackage[pdfusetitle,
 bookmarks=true,bookmarksnumbered=false,bookmarksopen=false,
 breaklinks=true,pdfborder={0 0 0},pdfborderstyle={},backref=false,colorlinks=true]
 {hyperref}
\hypersetup{
 linkcolor=linkblue, urlcolor=linkblue, citecolor=linkblue}

\usepackage{tikz}
\usetikzlibrary{arrows.meta,decorations.pathreplacing}

\usepackage{graphicx}
\usepackage{caption} 
 
\usepackage{bookmark}
\allowdisplaybreaks[1]
\newtheorem{theorem}{Theorem}
\newtheorem{lemma}{Lemma}
\newtheorem{proposition}{Proposition}
\newtheorem{corollary}{Corollary}
\theoremstyle{definition}

\theoremstyle{remark}

\newcommand{\Tr}{\operatorname{Tr}}
\newcommand{\id}{\operatorname{id}}
\newcommand{\supp}{\operatorname{supp}}
\newcommand{\ran}{\operatorname{ran}}
\newcommand{\cL}{\mathcal L}
\newcommand{\cD}{\mathcal D}
\newcommand{\cE}{\mathcal E}
\newcommand{\ket}[1]{\lvert #1\rangle}
\newcommand{\bra}[1]{\langle #1\rvert}
\newcommand{\proj}[1]{\lvert #1\rangle\!\langle #1\rvert}
\newcommand{\ketbra}[2]{\lvert #1\rangle\!\langle #2\rvert}
\newcommand{\norm}[1]{\left\lVert #1\right\rVert}
\newcommand{\abs}[1]{\left\lvert #1\right\rvert}
\newcommand{\step}[1]{\par\medskip\noindent\textbf{#1}\quad\ignorespaces}
\newcommand{\authorname}{Mark M. Wilde}
\newcommand{\authoremail}{wilde@cornell.edu}
\title{\textbf{Strong converse for the quantum capacity\\of the pure-loss bosonic channel}}
\author{\texorpdfstring{\authorname\\[0.4em]
\small \textit{School of Electrical and Computer Engineering, Cornell University}\\
\small \textit{Ithaca, New York 14853, USA}\\
\small\texttt{\authoremail}}{\authorname}}
\date{}
\begin{document}
\maketitle
\begin{abstract}
This paper reports the proof of a strong converse for the unconstrained quantum capacity of the pure-loss bosonic channel. At every fixed rate above capacity, the entanglement-generation fidelity of every code is bounded by a constant times the reciprocal of the number of channel uses. The bound holds without an energy constraint and for arbitrary encoded states, including states correlated across all input modes, and arbitrary joint decoders.
The proof combines quantum Chebyshev and hockey-stick testing inequalities with a uniform relative-entropy-variance bound for the balanced pure-loss channel, corresponding to transmissivity $\eta=1/2$. The variance bound follows by expressing the balanced beam splitter in bright and dark modes: the dark modes are exactly in vacuum, and any state orthogonal to that vacuum contains at least one dark photon. For general transmissivity, dilating the degrading attenuator reduces the problem to this balanced-channel setting and bounds the decoder test by precisely the factor that produces the known quantum-capacity threshold. The resulting argument establishes the strong converse at the unconstrained quantum capacity for every pure-loss bosonic channel.
\end{abstract}

\tableofcontents

\section{Introduction}

The quantum capacity of a communication channel is the largest asymptotic
rate at which entanglement can be transmitted with fidelity tending to
one. A capacity formula by itself does not determine the performance of codes
whose rates exceed that value: their fidelities might remain bounded
away from both zero and one. A strong converse excludes this possibility.
It states that every sequence of codes operating above capacity has
fidelity tending to zero. This distinction is particularly important for
bosonic channels, whose infinite-dimensional input spaces permit code
states with arbitrarily high energy.

The pure-loss bosonic channel describes an input mode mixed with a vacuum
mode at a beam splitter of transmissivity $\eta \in \left[0,1\right]$,
followed by discarding one output. It is a basic
model of optical attenuation; see \cite{Shapiro2009} for physical
motivation underlying the channel model. Holevo and Werner developed the capacity
analysis of bosonic Gaussian channels~\cite{HolevoWerner2001}. Wolf,
P\'erez-Garc\'ia, and Giedke evaluated the quantum capacities of a class of
Gaussian channels that includes the pure-loss channel~\cite[Eq.~(12)]{Wolf2007}.
For $1/2<\eta<1$, its unconstrained quantum capacity is equal to
$\log_2[\eta/(1-\eta)]$ qubits per mode; for $\eta\leq1/2$, it is equal to zero.
Energy-constrained private and quantum capacities, and the coding and
analytic issues associated with infinite-dimensional systems, were
studied in~\cite{WildeQi2018}. In this paper, no energy
constraint is imposed on the code.

\paragraph{Literature review.} There is a substantial literature on quantum strong converses. Morgan
and Winter proved a ``pretty strong'' converse for degradable
channels: rates above capacity cannot have fidelity approaching one,
with a bound stronger than a weak converse but not a conclusion of
vanishing fidelity~\cite[Theorem~2]{MorganWinter2014}. Their analysis also
reduces a strong-converse question for degradable channels to one for
associated symmetric channels~\cite[Theorem~19]{MorganWinter2014}. 
Ref.~\cite{WildeWinterErasure2014} proved a strong converse for almost all codes of the
quantum erasure channel.
Ref.~\cite{TomamichelWildeWinter2017} obtained strong-converse rates for quantum
communication from a quantity known as the Rains information of a channel and identified channel classes for
which these rates coincide with capacity.
The distinction between a strong-converse upper rate and the actual
capacity is essential when interpreting such bounds.

Strong converses for other bosonic communication settings were established
in~\cite[Theorem~24 and Corollary~25]{WildeTomamichelBerta2017}.
That is, Ref.~\cite{WildeTomamichelBerta2017} proved the strong converse for the unconstrained two-way-assisted
quantum capacities of the pure-loss channel and the quantum-limited
(pure) amplifier. For pure loss, allowing local operations and unlimited
two-way classical communication (LOCC) gives the threshold
$-\log_2(1-\eta)$~\cite[Corollary~25]{WildeTomamichelBerta2017}.
This assisted threshold is strictly larger than the unassisted value
considered here for $0<\eta<1$. For the quantum-limited bosonic amplifier channel with
 gain $G>1$, the assisted threshold is
$\log_2[G/(G-1)]$~\cite[Corollary~25]{WildeTomamichelBerta2017}.
It coincides with the amplifier's unassisted quantum capacity
\cite[Eq.~(12)]{Wolf2007}, so their result also establishes the strong
converse for its unassisted quantum capacity.

Recent preprints establish substantially broader finite-dimensional results. Exponential strong converses for finite-dimensional degradable and antidegradable channels are presented in~\cite{Kondra2026}, while~\cite{ChengTomamichel2026} treats unassisted classical and quantum communication over arbitrary finite-dimensional memoryless channels. A complementary approach based on a fully quantum blowing-up lemma yields an exponential strong converse for quantum capacity~\cite{BeigiTomamichel2026}.
These recent finite-dimensional results provide useful context for the present work,
but none of their strong-converse theorems or proof techniques are used in
the argument below. The estimates reported here act on the
bosonic channel itself and remain independent of any photon cutoff.
Furthermore, the fidelity decay proved here is polynomial rather than exponential.

For optical classical communication, Ref.~\cite{WildeWinter2014} established a
strong converse for the pure-loss channel under a photon-number
occupation constraint, and this paper also explained the importance of distinguishing
that constraint from a mean-photon-number constraint.
Ref.~\cite{Bardhan2015} extended the classical
strong-converse analysis to optical Gaussian communication
channels. These results concern classical capacity and
specified energy constraints. The main theorem presented here concerns unassisted
quantum communication at the unconstrained quantum-capacity threshold.
It does not assert the sharper strong converse associated with a finite
energy-constrained quantum capacity.

\paragraph{Contribution and proof idea.}
We prove an explicit upper bound on the entanglement-generation fidelity
that applies to every encoded state and decoder. At a fixed
positive rate gap, the bound is $O(1/n)$, where $n$ is the number of modes.
The argument has two channel-specific ingredients.
First, receiver--environment symmetry at a balanced beam splitter makes
the relevant relative entropy equal to zero. We prove that the associated
relative-entropy variance is at most $4n$ for every pure reference--input
state with finite photon support. The bound is independent of that support,
the reference dimension, and entanglement across the input modes. Second, the degrading
isometry converts the state of a general pure-loss channel into 
that resulting from a balanced beamsplitter (see Figure~\ref{fig:balanced-decomposition}
in this context). Evaluating the decoder test on this balanced-channel
state gives a comparison trace controlled by the degrading attenuator,
without introducing the dimension of the extra bosonic output.

Part of the first ingredient relies on established prior work. Sharma and
Warsi explicitly state the logarithmic trace inequality and the quantum
Chebyshev inequality used below
\cite[Lemmas~5 and~6, Appendix~A]{SharmaWarsi2012}.
Their proof of the trace inequality builds on closely related
operator-monotone trace inequalities; see Audenaert et al.
\cite[Lemma~1]{Audenaert2007} and Petz
\cite[Theorem~11.18]{Petz2008}.
Cheng and Hsieh restated the quantum Chebyshev inequality, attributed it to
Sharma and Warsi, and combined it with a hockey-stick testing
bound~\cite[Lemma~20]{ChengHsieh2018}.
We reproduce the needed trace inequality and
its hypothesis testing consequence. We also retain an elementary refinement
obtained by centering the Neyman--Pearson projector. The strong converse would also follow
from the unrefined quantum Chebyshev inequality.

Relative-entropy variance is a familiar quantity in quantum
hypothesis testing, notably in the second-order analyses of Tomamichel
and Hayashi~\cite[Definition~1]{TomamichelHayashi2013} and
Li~\cite[Eq.~(1)]{Li2014}.
Here it is not evaluated on independent copies of a fixed output state.
The required estimate is uniform over correlated code states. The
balanced splitter supplies this control through an unused output-mode
combination that is prepared in the vacuum state. The unit photon-number gap above
that vacuum converts a logarithmic fluctuation into a commutator
estimate. Creating one photon in this vacuum mode and taking a Schmidt
decomposition bounds the commutator without any dependence on input
energy. This uniform Gaussian estimate is the main analytical step.

\paragraph{Paper organization.}
Section~\ref{sec:main} defines the code model and states the quantitative
converse. Section~\ref{sec:testing} gives the hockey-stick testing tool.
Section~\ref{sec:variance} proves the balanced-channel variance estimate.
Section~\ref{sec:transport} evaluates the decoder test on the balanced-channel
state and computes its comparison trace, and Section~\ref{sec:completion} completes the proof for arbitrary
normal inputs. Section~\ref{sec:conclusion} summarizes the result and
identifies further research questions. Appendix~\ref{app:trace} provides a detailed proof of the
logarithmic trace inequality, and Appendix~\ref{app:dark-parity} verifies
the identification of receiver--environment exchange with dark-mode
parity. Apart from the previously established capacity formula, the proof uses
finite-dimensional matrix analysis and elementary properties of Fock
space and trace-class operators. We include the operator inequalities and
limiting arguments needed below.

\section{Channel, code model, and main theorem}\label{sec:main}

\subsection{Notation and the pure-loss channel}
For much of this paper, Hilbert spaces are taken to be separable. States are positive trace-class
operators of trace one, and channels are completely positive,
trace-preserving maps on trace-class operators. Their adjoints are
unital completely positive maps on bounded operators.
For a multipartite state $\psi$, a subsystem subscript denotes
the corresponding reduced state; for example, $\psi_B\coloneqq\Tr_{RE}\psi_{RBE}$.
We write
$\norm{Z}_1\coloneqq\Tr\sqrt{Z^\dagger Z}$ and
$\norm{Z}_2^2\coloneqq\Tr Z^\dagger Z$ for the trace and Hilbert--Schmidt
norms, respectively. We use $\norm Z_\infty$ for the operator norm.
We use Dirac notation for vectors: $\ket{\psi}$ denotes a vector and
$\norm{\ket{\psi}}$ its Hilbert-space norm. When $\ket{\psi}$ is a unit vector,
the corresponding pure-state density operator is $\psi\coloneqq\proj{\psi}$.
The notation $Z\geq0$ denotes positive
semidefiniteness. All logarithms denoted by $\ln$ are natural logarithms;
communication rates use $\log_2$. The von Neumann entropy of a state $\omega$ is denoted by
\begin{equation}
H(\omega)\coloneqq-\Tr\omega\ln\omega. \label{eq:vNeu-ent}
\end{equation}
A channel notation such as $\cD\colon B^n\to\widehat R$ specifies its input
and output systems; the map acts on their trace-class operators.
Identity factors are omitted when their systems are unambiguous.

The one-mode Fock space has orthonormal basis $\{\ket m:m\geq0\}$ and
annihilation operator $a$, defined on number states by
$a\ket m\coloneqq\sqrt m\ket{m-1}$ for $m\geq1$ and $a\ket0\coloneqq0$.
Choose phases and define the beam-splitter isometry $V_t$ on this basis by
\begin{equation}
 V_t\ket m_A
 \coloneqq\sum_{k=0}^m\sqrt{\binom mk}\,
 t^{k/2}(1-t)^{(m-k)/2}\ket k_B\ket{m-k}_E,
 \qquad 0\leq t\leq1.
 \label{eq:isometry}
\end{equation}
Eq.~\eqref{eq:isometry} is obtained by expanding
$(\sqrt t\,b^\dagger+\sqrt{1-t}\,e^\dagger)^m/\sqrt{m!}$ on the output
vacuum. Indeed, the coefficient of $\ket k_B\ket{m-k}_E$ is
\begin{equation}
 \frac1{\sqrt{m!}}\binom mk t^{k/2}(1-t)^{(m-k)/2}
 \sqrt{k!(m-k)!}
 =\sqrt{\binom mk}t^{k/2}(1-t)^{(m-k)/2}.
 \label{eq:fock-coefficient}
\end{equation}
At $t=0$ and $t=1$, the coefficients are interpreted by continuity.
The binomial theorem implies that the norm of $V_t\ket m_A$ is equal to one, and distinct input number states
have orthogonal images because their total output photon numbers differ.
Thus $V_t$ extends to an isometry on the full Fock space.

The pure-loss channel $\cL_t$ is then defined by
\begin{equation}
\cL_t(Z)\coloneqq\Tr_E[V_tZV_t^\dagger],
\end{equation}
and tracing the other output
gives $\cL_{1-t}$ under the fixed identification of the mode spaces.

For later reference, define the capacity value
\begin{equation}
 q_\eta\coloneqq
 \begin{cases}
 0,&0\leq\eta\leq1/2,\\[1mm]
 \displaystyle\log_2\frac{\eta}{1-\eta},&1/2<\eta<1.
 \end{cases}
 \label{eq:qeta}
\end{equation}
Eq.~\eqref{eq:qeta} is the known unconstrained quantum capacity
\cite[Eq.~(12)]{Wolf2007}; see also the energy-constrained treatment
in~\cite{WildeQi2018}. Its achievability is not reproved here.
The converse below derives the same threshold directly from the
beam-splitter structure.

\subsection{Codes and fidelity}

Let quantum systems $R$ and $\widehat R$ have dimension $M\in \mathbb{N}$. Define
\begin{equation}
 \ket{\Phi_M}_{R\widehat R}
 \coloneqq\frac1{\sqrt M}\sum_{j=1}^M\ket j_R\ket j_{\widehat R},
 \qquad \Phi_M\coloneqq\proj{\Phi_M}.
 \label{eq:maximally-entangled}
\end{equation}
An $(n,M)$ entanglement-generation code consists of an encoded state
$\rho_{RA^n}$ and a decoder $\cD\colon B^n\to\widehat R$. Its fidelity $F$ is defined as
\begin{equation}
 F\coloneqq\Tr\!\left[
 \Phi_M(\id_R\otimes\cD\circ\cL_\eta^{\otimes n})(\rho_{RA^n})
 \right].
 \label{eq:code}
\end{equation}
We allow arbitrary $\rho_R$.
In particular, an arbitrary encoding channel
$\cE\colon\mathbb C^M\to A^n$ acting on one share of $\Phi_M$ produces
$\rho_{RA^n}\coloneqq(\id_R\otimes\cE)(\Phi_M)$ and is included in the model.
No assistance, preshared entanglement, or feedback is included. Arbitrary
joint encoding across the $n$ input modes and joint decoding of their
outputs are allowed.

A sequence of $(n,M_n)$ codes indexed by $n\in\mathbb{N}$ has rate at least $r$ if
\begin{equation}
\liminf_{n\to\infty}\frac{1}{n}\log_2M_n\geq r.
\end{equation}
A strong converse at $q_\eta$ asserts that a strict inequality between
this lower limiting rate and $q_\eta$ forces $F_n\to0$ for every such sequence of codes in the limit $n\to \infty$.

\begin{theorem}[Unconstrained fidelity bound]\label{thm:main}
Fix $0\leq\eta<1$ and $n,M\in\mathbb{N}$. Every
code as defined above satisfies, for every $b>0$,
\begin{equation}
F
\leq
\frac{4n}{4n+b^2(\ln 2)^2}
+
2^{b+nq_\eta-\log_2 M}.
\label{eq:main-free}
\end{equation}
In particular, define the rate gap
\begin{equation}
\delta_n\coloneqq \log_2 M-nq_\eta.
\label{eq:rate-gap}
\end{equation}
If $\delta_n>0$, then
\begin{equation}
F
\leq
\frac{16n}{16n+\delta_n^2(\ln 2)^2}
+
2^{-\delta_n/2},
\label{eq:main-gap}
\end{equation}
by setting $b = \delta_n/2$.
Consequently, for every $\Delta>0$, a code such that
$\log_2 M\geq n(q_\eta+\Delta)$ satisfies
\begin{equation}
F
\leq
\frac{16}{16+n\Delta^2(\ln 2)^2}
+
2^{-n\Delta/2}.
\label{eq:main-rate}
\end{equation}
These bounds are uniform over the encoded state, its energy, correlations
among the input modes, and the decoder.
\end{theorem}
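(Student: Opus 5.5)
The bound will come from a one-shot hypothesis-testing converse. Write $T\coloneqq(\id_R\otimes\cD^\dagger)(\Phi_M)$ for the decoder test, so that $F=\Tr[T\,\omega_{RB^n}]$ with $\omega_{RB^n}\coloneqq(\id_R\otimes\cL_\eta^{\otimes n})(\rho_{RA^n})$, and $0\leq T\leq 1$.

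\textbf{Reductions.} First purify $\rho_{RA^n}$; by joint concavity of fidelity in the state it suffices to treat pure $\rho$. Then truncate to finite photon support. The approximation error tends to zero in trace norm, while the right-hand side of \eqref{eq:main-free} does not depend on the state. This reduction is where normal inputs enter and is deferred to the limiting argument of Section~\ref{sec:completion}.

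\textbf{Testing step.} For any positive $\sigma$ and any $\lambda$, the hockey-stick inequality gives
\[
\Tr[T\psi]\leq\Tr[(\psi-e^{\lambda}\sigma)_+]+e^{\lambda}\Tr[T\sigma].
\]
Suppose $D(\psi\|\sigma)\leq 0$ and $V(\psi\|\sigma)\leq V$. Then the quantum Chebyshev/Cantelli inequality of Section~\ref{sec:testing}, with the centered Neyman--Pearson projector, bounds the first term by $V/(V+\lambda^2)$. Choosing $\lambda=b\ln2$ and $V=4n$ yields exactly the first term of \eqref{eq:main-free}. It then remains to exhibit $\psi$ and $\sigma$ with relative entropy at most zero, variance at most $4n$, and $\Tr[T\sigma]\leq 2^{nq_\eta}/M$.

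\textbf{Balanced case $\eta=1/2$.} Take $\psi=\omega_{RB^n}$ and $\sigma=1_R\otimes\omega_{B^n}$. Then
\[
D(\psi\|\sigma)=-H(R|B^n)=H(B^n)-H(E^n)=0,
\]
because $B^n$ and $E^n$ are exchanged by the balanced splitter. Moreover
\[
\Tr[T\sigma]=\Tr[\Phi_M(1_R\otimes\cD(\omega_{B^n}))]=1/M.
\]
The variance estimate is the core of the proof and the main obstacle. I would pass to the bright modes $(b+e)/\sqrt2$ and dark modes $(b-e)/\sqrt2$, in which the dark modes are exactly in vacuum. The logarithmic fluctuation $\ln\psi_{RB}-\ln\psi_B$ would be written through the $B\leftrightarrow E$ swap, which by Appendix~\ref{app:dark-parity} acts as dark-mode parity. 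The variance then becomes a commutator-type quantity. Its support lies on vectors with at least one dark photon, and the unit photon-number gap lets one bound it by creating a single dark photon in each mode and Schmidt-decomposing. This gives at most $4$ per mode, independent of energy and of the reference dimension; the logarithmic trace inequality of Appendix~\ref{app:trace} supplies the operator step.

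\textbf{General $\eta$.} For $\eta\leq 1/2$, use $\cL_\eta=\cL_{2\eta}\circ\cL_{1/2}$ and absorb $\cL_{2\eta}^{\otimes n}$ into the decoder. This reduces to the balanced case with $q_\eta=0$.

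For $1/2<\eta<1$, dilate the degrading attenuator $\cL_s$ with $s=(1-\eta)/\eta$ via the isometry $V_s\colon B\to B'F$. The resulting state on $B'$ together with the environment $E$ of $\cL_\eta$ is a balanced beam-splitter output of a suitable input. I would therefore apply the balanced variance bound to the pure state on $R$ together with $B'$ (the reference of that balanced channel now includes $F$, which the bound permits). The decoder test is pulled back to this state as $T'\coloneqq\cL_s^{\dagger\otimes n}$-transported $T$, with comparison operator $1_R\otimes(\cdot)$ on $B'$.

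The comparison trace is then controlled per mode by
\[
\sum_{m\geq k}\binom mk s^k(1-s)^{m-k}=\frac1s .
\]
This identity says $\Tr[\cL_s^\dagger(Y)]\leq s^{-1}\Tr Y$ for $Y\geq0$. It gives $\Tr[T'\sigma]\leq s^{-n}/M=2^{nq_\eta}/M$, which is precisely the second term of \eqref{eq:main-free}.

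\textbf{Remaining bounds.} Bound \eqref{eq:main-gap} follows by substituting $b=\delta_n/2$, and \eqref{eq:main-rate} by monotonicity in $\delta_n\geq n\Delta$.
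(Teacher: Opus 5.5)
Your overall architecture is the paper's: Chebyshev testing with $D=0$ and $V\le 4n$ for the balanced output, the dark-vacuum gap, the degrading dilation with the reference enlarged to $RF^n$, the comparison weight from $\cL_s(I)=s^{-1}I$, the cascade for $\eta\le 1/2$, and cutoff removal. However, the step you call the core is missing its decisive ingredient. After the gap argument you must bound $\norm{[d_i,Y]\ket{\psi}}^2$ with $Y=\ln\sigma_{B^n}-\ln\sigma_{E^n}$, and this is at most $2\Tr\sigma[b_i,\ln\sigma]^\dagger[b_i,\ln\sigma]$. Creating one dark photon and Schmidt-decomposing controls only the commutator with $\sigma$ itself. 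For $\ket{\chi_i}=d_i^\dagger\ket{\psi}$ one has $\Tr_{RE^n}\ketbra{\psi}{\chi_i}=[\sigma,b_i]/\sqrt2$, using $e_i\ket{\psi}=b_i\ket{\psi}$, and Lemma~\ref{lem:cross} then gives $\Tr[b_i,\sigma]^\dagger\sigma^+[b_i,\sigma]\le 2$. Passing to the logarithm needs two further facts, both applied in an eigenbasis of $\sigma$. The first is the scalar comparison $\lambda_j(\ln\lambda_j-\ln\lambda_k)^2\le(\lambda_j-\lambda_k)^2/\lambda_k$, which follows from $r(\ln r)^2\le(r-1)^2$. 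The second is the support invariance $(I-P_\sigma)b_iP_\sigma=0$, extracted from the range inclusion, so that pairs with $\lambda_j>0=\lambda_k$ do not contribute. The logarithmic trace inequality (Lemma~\ref{lem:log}) cannot do this job. In the paper it serves only to justify the Chebyshev bound for the noncommuting Neyman--Pearson projector.

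Two other steps need correcting. First, the reduction to pure inputs. Purifying $\rho_{RA^n}$ enlarges the reference to $RR'$, so the test becomes $T\otimes I_{R'}$ and the comparison weight becomes $\Tr[(I_{RR'}\otimes\sigma_{B^n})(T\otimes I_{R'})]=\dim(R')/M$, which ruins the rate term. Concavity would in any case point the wrong way for an upper bound. What works is linearity: $F$ is affine in $\rho_{RA^n}$, so decompose $\rho_{RA^n}=\sum_j w_j\proj{\phi_j}$ into pure states on the same $R\otimes A^n$ and average the pure-state bounds. Second, in the degraded case the transported test must be the isometric conjugate $T'=(I_R\otimes V_s^{\otimes n})T(I_R\otimes V_s^{\otimes n})^\dagger$. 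This is an effect on $RF^nB'^n$ whose acceptance probability on the $RF^nB'^n$ marginal is exactly $F$, and it is paired with the comparison operator $I_{RF^n}\otimes\sigma_{B'^n}$. It is not obtained through $\cL_s^\dagger$, and it must retain $F^n$, since only conjugation by the full isometry preserves the acceptance probability. With that choice, $\Tr_R T=I/M$ gives $\Tr[T'(I_{RF^n}\otimes\sigma_{B'^n})]=M^{-1}\Tr\cL_s^{\dagger\otimes n}(\sigma_{B'^n})\le s^{-n}/M$, which is exactly where your identity correctly enters.
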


\begin{corollary}[Strong converse]\label{cor:sc}
For every sequence of codes satisfying
$\liminf_{n\to\infty}\frac{1}{n}\log_2M_n>q_\eta$, the fidelities tend to zero as $n\to\infty$.
In particular, the unconstrained quantum capacity of the pure-loss
bosonic channel obeys the strong converse property.
\end{corollary}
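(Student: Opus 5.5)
The statement to prove is Corollary~\ref{cor:sc}, which I plan to derive directly from Theorem~\ref{thm:main}, specifically from the rate form \eqref{eq:main-rate}.

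\textbf{Reduction to the rate form.} Fix a sequence of $(n,M_n)$ codes with
\begin{equation*}
r\coloneqq\liminf_{n\to\infty}\tfrac1n\log_2 M_n>q_\eta .
\end{equation*}
If $r=\infty$, replace it by any finite number exceeding $q_\eta$. Then choose $\Delta\coloneqq(r-q_\eta)/2>0$, or $\Delta\coloneqq 1$ in the infinite case. By the definition of the lower limit, there is $n_0$ such that for all $n\geq n_0$,
\begin{equation*}
\tfrac1n\log_2 M_n\geq q_\eta+\Delta, \qquad\text{that is,}\qquad \log_2 M_n\geq n(q_\eta+\Delta).
\end{equation*}
For $\eta=1$ the capacity is infinite, so the corollary is vacuous there. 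Indeed, the theorem is stated only for $0\leq\eta<1$, which is the only range where $q_\eta$ is finite.

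\textbf{Applying the bound.} For each $n\geq n_0$, the hypothesis of \eqref{eq:main-rate} holds with this fixed $\Delta$. Hence
\begin{equation*}
F_n\leq \frac{16}{16+n\Delta^2(\ln 2)^2}+2^{-n\Delta/2}.
\end{equation*}
Both terms tend to zero as $n\to\infty$: the first is $O(1/n)$ and the second decays exponentially. Since $F_n\geq0$, this gives $F_n\to0$.

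\textbf{Interpretation.} This is exactly the strong converse property at $q_\eta$. The value $q_\eta$ is the unconstrained quantum capacity by \cite[Eq.~(12)]{Wolf2007}, so the capacity obeys the strong converse.

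There is no real obstacle at this stage, because all the substance lies in Theorem~\ref{thm:main}. The only points needing care are that $\Delta$ must be fixed independently of $n$, and that the bound is uniform over the encoder and decoder, so it applies to each code in the sequence without further argument.
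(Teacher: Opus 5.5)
Your proposal is correct and is essentially the paper's own proof: fix $\Delta$ strictly below the rate gap, use the definition of the lower limit to get $\log_2 M_n\geq n(q_\eta+\Delta)$ for all large $n$, and apply \eqref{eq:main-rate}, whose first term decays as $O(1/n)$ and whose second decays exponentially. Your separate treatment of an infinite lower limit is a harmless detail that the paper leaves implicit.
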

\begin{proof}
Choose $\Delta>0$ strictly smaller than the gap between the lower limiting
rate and $q_\eta$. By the definition of the lower limit,
$\log_2M_n\geq n(q_\eta+\Delta)$ for all sufficiently large $n$.
Eq.~\eqref{eq:main-rate} then applies. Its first term tends to zero
as $1/n$, and its second term tends to zero exponentially, proving the
claim. The proof of Theorem~\ref{thm:main} is given below.
\end{proof}

\subsection{Why balanced loss suffices}\label{sec:balanced-overview}

At $\eta=1/2$, the receiver and the environment are interchangeable.
For an input with finite photon support, this symmetry makes the coherent
information, $H(B^n)-H(E^n)$, equal to zero. The converse requires control
of fluctuations as well. We will prove that the relevant relative-entropy
variance is at most $4n$, independently of the photon support. A quantum
Chebyshev inequality then bounds the acceptance probability of every test
on the output, including the test associated with a decoder.

For $1/2<\eta<1$, the receiver can be attenuated further to reproduce
the environment. Guha, Shapiro, and Erkmen identified the corresponding
degrading beam splitter: its transmissivity is given by
\cite[Fig.~1]{GuhaShapiroErkmen2008}
\begin{equation}
\tau\coloneqq\frac{1-\eta}{\eta}.
\end{equation}
Then the original pure-loss isometry followed by a dilation of
$\mathcal{L}_{\tau}$ has the following action on the input creation operator
$a^\dagger$:
\begin{equation}
a^\dagger
\longmapsto
\sqrt{1-\eta}e^\dagger
+\sqrt{1-\eta}e'^\dagger
+\sqrt{2\eta-1}f^\dagger.
\end{equation}
The equal coefficients of $e^\dagger$ and $e'^\dagger$ reveal an
equivalent factorization: first split the input into systems $G$ and
$F$ according to
\begin{equation}
a^\dagger
\longmapsto
\sqrt{2(1-\eta)}g^\dagger
+\sqrt{2\eta-1}f^\dagger,
\end{equation}
and then send $G$ through a balanced beam splitter,
\begin{equation}
g^\dagger
\longmapsto
\frac{e^\dagger+e'^\dagger}{\sqrt{2}}.
\end{equation}
Thus, for a pure encoded input on $RA^n$, the state immediately before
the balanced beam splitters is pure on $RF^nG^n$, and the final
three-output state is obtained by sending $G^n$ through
$V_{1/2}^{\otimes n}$ while treating $RF^n$ as the reference system.
This is the precise sense in which the resulting state is a
balanced-channel state with reference $RF^n$. See Figure~\ref{fig:balanced-decomposition} for a visualization.

\begin{figure}[t]
\centering
\resizebox{\linewidth}{!}{%
\begin{tikzpicture}[
  x=1cm,y=1cm,
  font=\small,
  line cap=round,line join=round,
  optical/.style={line width=.75pt,-{Latex[length=1.7mm,width=1.15mm]}},
  reference/.style={line width=.65pt,-{Latex[length=1.7mm,width=1.15mm]}},
  bs/.style={draw,line width=.55pt,fill=white,
             minimum width=5.6mm,minimum height=5.6mm,inner sep=0pt},
  heading/.style={font=\small\bfseries,anchor=west},
  port/.style={inner sep=2pt},
  parameter/.style={font=\small,anchor=south east,inner sep=1pt}
]

\begin{scope}
  \node[heading] at (-.20,2.03) {(a) Channel and degrader};
  \node[bs] (a1) at (1.35,0) {};
  \node[bs] (a2) at (4.00,0) {};

  \draw[optical] (0,0) node[port,left] {$A$} -- (a1.west);
  \draw[optical] (a1.east) -- node[above=3pt] {$B$} (a2.west);
  \draw[optical] (a2.east) -- (5.45,0) node[port,right] {$E'$};

  \draw[optical,rounded corners=3pt]
    (a1.north) -- (1.35,1.35) -- (5.45,1.35)
    node[port,right] {$E$};
  \draw[optical,rounded corners=3pt]
    (a2.south) -- (4.00,-1.35) -- (5.45,-1.35)
    node[port,right] {$F$};

  \draw[optical] (1.35,-.73) node[below=2pt] {$\lvert0\rangle$}
    -- (a1.south);
  \draw[optical] (4.00,.73) node[above=2pt] {$\lvert0\rangle$}
    -- (a2.north);

  \draw[line width=1.1pt] (a1.south west) -- (a1.north east);
  \draw[line width=1.1pt] (a2.north west) -- (a2.south east);
  \node[parameter] at (1.04,.36) {$\eta$};
  \node[parameter] at (3.69,.36) {$\tau$};

  \draw[reference] (0,-2.23) node[port,left] {$R$}
    -- (5.45,-2.23) node[port,right] {$R$};
\end{scope}

\node[font=\Large] at (6.55,-.14) {$\equiv$};

\begin{scope}[xshift=7.65cm]
  \node[heading] at (-.20,2.03) {(b) Balanced factorization};
  \node[bs] (b1) at (1.35,0) {};
  \node[bs] (b2) at (4.00,0) {};

  \draw[optical] (0,0) node[port,left] {$A$} -- (b1.west);
  \draw[optical] (b1.east) -- node[above=3pt] {$G$} (b2.west);
  \draw[optical] (b2.east) -- (5.45,0) node[port,right] {$E'$};

  \draw[optical,rounded corners=3pt]
    (b1.south) -- (1.35,-1.35) -- (5.45,-1.35)
    node[port,right] {$F$};
  \draw[optical,rounded corners=3pt]
    (b2.north) -- (4.00,1.35) -- (5.45,1.35)
    node[port,right] {$E$};

  \draw[optical] (1.35,.73) node[above=2pt] {$\lvert0\rangle$}
    -- (b1.north);
  \draw[optical] (4.00,-.73) node[below=2pt] {$\lvert0\rangle$}
    -- (b2.south);

  \draw[line width=1.1pt] (b1.north west) -- (b1.south east);
  \draw[line width=1.1pt] (b2.south west) -- (b2.north east);
  \node[parameter] at (1.04,.36) {$2(1-\eta)$};
  \node[parameter] at (3.69,.36) {$\tfrac12$};

  \draw[reference] (0,-2.23) node[port,left] {$R$}
    -- (5.45,-2.23) node[port,right] {$R$};

  \draw[decorate,decoration={brace,amplitude=4pt},line width=.55pt]
    (5.95,-1.18) -- (5.95,-2.40);
  \node[anchor=west,align=left,font=\footnotesize,inner sep=0pt]
    at (6.19,-1.79) {reference\\$RF$};
\end{scope}

\node at (7.00,-3.12) {$\displaystyle
  a^\dagger\longmapsto
  \sqrt{1-\eta}\,e^\dagger
  +\sqrt{1-\eta}\,e'^\dagger
  +\sqrt{2\eta-1}\,f^\dagger,
  \qquad \tau\coloneqq\frac{1-\eta}{\eta}.
$};
\end{tikzpicture}%
}
\caption{%
Equivalent vacuum-fed beam-splitter dilations for $1/2<\eta<1$.
The labels beside the beam splitters denote power transmissivities,
with $\tau\coloneqq(1-\eta)/\eta$. Mode phases follow the convention
of the displayed creation-operator identity.
(a) The pure-loss channel of transmissivity $\eta$ is followed by
the degrading attenuation of transmissivity $\tau$.
(b) An initial splitting of transmissivity $2(1-\eta)$ produces
$G$ and $F$, after which only $G$ enters the balanced beam splitter.
Both networks have the same labeled outputs $E,E',F$ and leave
$R$ unchanged. For $n$ uses, the optical splitters act modewise.
For a pure input on $RA^n$, the state before the balanced splitters
in panel (b) is pure on $RF^nG^n$, so $RF^n$ is the reference for
the balanced channel $\mathcal{L}_{1/2}^{\otimes n}$.}
\label{fig:balanced-decomposition}
\end{figure}

The decoder test can be evaluated on this three-output state without
changing the code's fidelity. The comparison weight in the testing
inequality is at most $[\eta/(1-\eta)]^n/M$, as proved in
Section~\ref{sec:transport}. This factor produces the capacity term in
\eqref{eq:main-free}. The enlarged system $RF^n$ serves as a reference
for that inequality; the logical target of the code remains $R$.
This distinction allows us to apply the balanced-channel estimate while
keeping the original message dimension $M$.

The factorization used here is a concrete
infinite-dimensional bosonic realization of Morgan and Winter's method
for reducing the strong converse of degradable channels to that for symmetric channel \cite[Section~VII]{MorganWinter2014}. After dilating the
degrading attenuator, an initial splitting $A\to GF$ is followed by the
balanced pure-loss isometry $V_{1/2}\colon G\to EE'$, which is unchanged
by exchanging $E$ and $E'$. Its reduced channel
$\mathcal L_{1/2}$ is therefore symmetric. Morgan and Winter's theorem
assumes finite-dimensional channels and is not invoked here; 
the bosonic factorization and the required cutoff-independent estimates
are established directly.

\subsection{Finite photon support for the operator calculations}
For $n$ modes, let $\Pi_L^{(n)}$ denote the projection onto the subspace
with total photon number at most $L\in\{0,1,2,\ldots\}$:
\begin{equation}
 \Pi_L^{(n)}\coloneqq\sum_{\substack{m_1,\ldots, m_n\geq 0 , \\ m_1+\cdots+m_n\leq L}}
 \proj{m_1,\ldots,m_n}.
 \label{eq:cutoff}
\end{equation}
It has finite rank, and $\Pi_L^{(n)}\to I^{\otimes n}$ strongly as $L\to\infty$.
We first prove all analytical estimates for pure input states supported on
$\Pi_L^{(n)}$, for an arbitrary finite $L$. Passive beam splitters preserve
total photon number, and so all output density operators then have finite
support. The annihilation operators remain the full physical Fock
operators; no truncated canonical commutation relation is used.

Every bound to be established will be independent of $L$. In
Section~\ref{sec:completion}, we remove this restriction for each fixed
code using convergence of fidelity alone. For this purpose, we do not need continuity of
the logarithmic moment on arbitrary infinite-energy states.

\section{A quantum Chebyshev bound in hockey-stick form}\label{sec:testing}

The coding argument requires an upper bound on the acceptance probability
of a test with a small comparison weight. We derive such a bound from
the relative entropy and its variance. Allowing an unnormalized comparison
operator is essential because the coding proof will use an operator of
the form $I_R\otimes\sigma_B$.

\subsection{Comparison operators and testing}
The developments in this section apply to finite-dimensional Hilbert spaces. For a Hermitian $H$, let
$H_+$ and $H_-$ be its positive and negative parts, so that
$H=H_+-H_-$ and $H_+H_-=0$. Write $\{H>0\}$ for its strictly positive
spectral projection. An effect is an operator $T$ that satisfies $0\leq T\leq I$.
For a state $\rho$ and a positive semidefinite comparison operator $S$, define
\begin{equation}
 E_\gamma(\rho\|S)\coloneqq\Tr(\rho-\gamma S)_+,
 \qquad\gamma>0.
 \label{eq:hockey}
\end{equation}
We use the unadjusted positive-part convention in~\eqref{eq:hockey},
including when $S$ is unnormalized. For normalized states and
$\gamma\geq1$, the above quantity reduces to the usual quantum hockey-stick divergence.

For every Hermitian $H$,
\begin{equation}
 \max_{0\leq T\leq I}\Tr TH=\Tr H_+.
 \label{eq:variational-positive-part}
\end{equation}
Indeed, $\Tr TH=\Tr TH_+-\Tr TH_-\leq\Tr H_+$, because all traces
of products of positive operators here are nonnegative and $T\leq I$.
Equality holds at $T\coloneqq\{H>0\}$. Therefore
\begin{equation}
 \Tr\rho T\leq E_\gamma(\rho\|S)+\gamma\Tr ST.
 \label{eq:hockey-test}
\end{equation}
This is the form in which the testing inequality enters the coding proof.
The quantity $\Tr ST$ is a comparison weight; it need not be a probability.

Suppose that $\supp\rho\subseteq\supp S$, and let $Q_S$ denote the support
projection of $S$. Restrict $\rho$ and $S$ to $\ran Q_S$ and replace
$T$ by $Q_STQ_S$, which is an effect on this space. There $S>0$.
This compression changes neither $\Tr\rho T$ nor $\Tr ST$. It also leaves
$E_\gamma(\rho\|S)$ unchanged: both $\rho$ and $S$ vanish on the
orthogonal complement, and so does their difference. All remaining
operator calculations in this section take place on this finite positive
support. We use Umegaki's relative entropy
\cite{Umegaki1962} and the centered relative-entropy variance
\cite[Eq.~(1)]{Li2014}:
\begin{align}
D(\rho\|S)&\coloneqq\Tr\rho(\ln\rho-\ln S), \label{eq:DV-step-1}\\
V(\rho\|S)&\coloneqq
 \Tr\rho\bigl(\ln\rho-\ln S-D(\rho\|S)I\bigr)^2 \label{eq:DV}
  = \Tr\rho(\ln\rho-\ln S)^2-D(\rho\|S)^2.
\end{align}
The logarithm of $\rho$ is taken on its positive support. As an auxiliary
Hermitian matrix it may be extended by zero on its kernel. This creates
no ambiguity in these definitions, since
\begin{equation}
 V(\rho\|S)
 =\norm{(\ln\rho-\ln S-D(\rho\|S)I)\sqrt\rho}_2^2
 \label{eq:variance-hs}
\end{equation}
and a change confined to $\ker\rho$ annihilates $\sqrt\rho$.
In particular, $V\geq0$. Note that the quantity $D$ may be negative for unnormalized
$S$. 

\subsection{Logarithmic trace inequality}
\begin{lemma}[Logarithmic trace inequality]\label{lem:log}
Let $A\geq0$, $B>0$, and $P\coloneqq\{A-B>0\}$. Then
\begin{equation}
 \Tr\!\left[PA(\ln A-\ln B)\right]\geq0.
 \label{eq:log-trace}
\end{equation}
The trace in~\eqref{eq:log-trace} is real. For singular $A$, the first
term uses $0\ln0=0$.
\end{lemma}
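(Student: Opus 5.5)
We prove the inequality through an integral representation of the logarithm together with operator monotonicity, in the spirit of Audenaert et al.\ and Sharma--Warsi. Since $P$ is defined by $A-B$ and not by $A$ or $B$ separately, the argument has to compare $A$ with $B$ through the ordering $PAP\geq PBP$.

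\textbf{Reality and reduction.} First, $P=\{A-B>0\}$ satisfies $P(A-B)P\geq0$. Second, $\ln A-\ln B$ is Hermitian, and $P$ commutes with $A-B$. We will not need $\Tr[PA(\ln A-\ln B)]$ to be manifestly real. Instead we will write it as a real integral of traces of Hermitian products.

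\textbf{Integral representation.} We use
\begin{equation*}
\ln x=\int_0^\infty\Bigl(\frac1{1+s}-\frac1{x+s}\Bigr)ds,
\end{equation*}
which holds on the support for $A$; with the convention $0\ln0=0$, the factor $A$ in front kills the kernel of $A$. This gives
\begin{equation*}
A(\ln A-\ln B)=\int_0^\infty A\bigl[(B+s)^{-1}-(A+s)^{-1}\bigr]ds .
\end{equation*}
Write $A(A+s)^{-1}=I-s(A+s)^{-1}$, and similarly $A(B+s)^{-1}=(A-B)(B+s)^{-1}+I-s(B+s)^{-1}$. The integrand then becomes
\begin{equation*}
(A-B)(B+s)^{-1}+s\bigl[(A+s)^{-1}-(B+s)^{-1}\bigr].
\end{equation*}
Taking $\Tr P(\cdot)$, the first term gives $\Tr P(A-B)(B+s)^{-1}$ and the second gives $s\Tr P[(A+s)^{-1}-(B+s)^{-1}]$. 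Using the resolvent identity $(A+s)^{-1}-(B+s)^{-1}=-(A+s)^{-1}(A-B)(B+s)^{-1}$, the integrand's trace becomes
\begin{equation*}
\Tr P(A-B)(B+s)^{-1}-s\Tr P(A+s)^{-1}(A-B)(B+s)^{-1}=\Tr\bigl[P\,A(A+s)^{-1}(A-B)(B+s)^{-1}\bigr].
\end{equation*}

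\textbf{Positivity step (expected obstacle).} The difficulty is that $P$ does not commute with $A$ or $B$, so the integrand $\Tr[PA(A+s)^{-1}(A-B)(B+s)^{-1}]$ need not be pointwise nonnegative. My first attempt is a Lieb--Ando/Petz-type convexity argument. Set $\Delta=A-B$ and $f(t)=\Tr P(B+t\Delta)\ln(B+t\Delta)-\Tr P(B+t\Delta)\ln B$. The target quantity is $f(1)$, and $f(0)=0$.

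Rather than working with a path, I would use the Audenaert et al.\ route directly. That route applies the operator monotone function $x\mapsto x^{s}$ and the Chernoff-type inequality $\Tr A^{s}B^{1-s}\geq\Tr A-\Tr(A-B)_+$, in the form
\begin{equation*}
\Tr P(A-A^{s}B^{1-s})\geq0,\qquad 0<s<1.
\end{equation*}
One divides by $1-s$ and lets $s\to1$. Since $A^{s}B^{1-s}=A-(1-s)A(\ln B-\ln A)+O((1-s)^2)$ in finite dimensions, the limit gives $\Tr PA(\ln A-\ln B)\geq0$, which is exactly \eqref{eq:log-trace}. The trace is real because it is the limit of real quantities. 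To see this, note that $\Tr PA^{s}B^{1-s}$ need not be real for fixed $s$, so we take real parts throughout; the limit equals its real part.

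So the core is the inequality $\operatorname{Re}\Tr P(A-A^{s}B^{1-s})\geq0$. I would prove it by Audenaert's argument. Since $B\leq A+(B-A)_+$, operator monotonicity gives $B^{1-s}\leq(A+(B-A)_+)^{1-s}$, and the relevant splitting leads to $\Tr P(A^{s}-B^{s})A^{1-s}$-type terms. Each is controlled using $(A-B)P\geq0$ through the decomposition $A-B=(A-B)_+-(A-B)_-$ and monotonicity of $x\mapsto x^{s}$. This is the main obstacle: adapting the proof of Audenaert's $\Tr A^sB^{1-s}\geq\Tr A-\Tr(A-B)_+$ so that it localizes to $\Tr P(\cdot)$ with a real part. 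I expect the analysis there to be delicate; the limit $s\to1$ afterwards is routine because the spaces are finite dimensional.
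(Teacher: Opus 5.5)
\textbf{There is a genuine gap, and the intermediate inequality you plan to prove is false as stated.} Your reduction rests entirely on $\operatorname{Re}\Tr P(A-A^{s}B^{1-s})\geq0$ for all $0<s<1$. This is not a ``form'' of Audenaert's $\Tr A^{s}B^{1-s}\geq\Tr A-\Tr(A-B)_+$, which is a statement about the full trace. In fact the localized version fails for small $s$. Take $s=\tfrac1{10}$, $B=\mathrm{diag}(1,4)$ and
\[
A=\begin{pmatrix}1-\tfrac{1}{375}+\varepsilon & \tfrac{1}{10}\\ \tfrac{1}{10} & \tfrac14\end{pmatrix},\qquad \varepsilon>0\ \text{small}.
\]
Then $A\geq0$ and $\det(A-B)=-\tfrac{15}{4}\varepsilon<0$, so $P$ has rank one. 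As $\varepsilon\downarrow0$, $P$ tends to the projector onto $p\propto(1,\tfrac{2}{75})$, which satisfies $Ap=Bp$. Write $A^{s}=\alpha I+\beta A$; in the limit $\alpha\approx0.8245$ and $\beta\approx0.1747$. Use $\langle p|B^{r}|p\rangle=(5625+4^{r+1})/5629$. Then:
\begin{itemize}
\item $\Tr PA\to5641/5629\approx1.00213$;
\item $\Tr PA^{s}B^{1-s}\to\alpha\langle p|B^{1-s}|p\rangle+\beta\langle p|B^{2-s}|p\rangle\approx1.00227$.
\end{itemize}
Hence $\Tr P(A-A^{s}B^{1-s})\approx-1.4\times10^{-4}<0$ for small $\varepsilon$.

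So the adaptation you sketch cannot work as planned. Audenaert's (or Ozawa's) argument uses operator monotonicity of $x\mapsto x^{r}$ uniformly in $r\in[0,1]$, and its monotonicity steps rely on $\Tr CX\geq0$ for $C,X\geq0$. That positivity is lost once the non-commuting projector $P$ is inserted. What you actually need is only the regime $s\uparrow1$. But there the first-order content of your inequality is exactly $\Tr PA(\ln A-\ln B)\geq0$, so the reduction is circular.

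Two smaller points:
\begin{itemize}
\item The expansion has a sign slip; it should read $A^{s}B^{1-s}=A-(1-s)A(\ln A-\ln B)+O((1-s)^2)$.
\item Reality does not follow from ``the limit equals its real part.'' It follows from $[P,A-B]=0$: $\Tr P[A,\ln B]=\Tr P[A-B,\ln B]=\Tr[P,A-B]\ln B=0$, and $A\ln A$ is Hermitian.
\end{itemize}

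\textbf{The missing idea is an integral representation whose integrand is pointwise nonnegative.} The paper interpolates along $Z_s=(1-s)B+sA$, keeps the prefactor $A$ fixed, and differentiates only the logarithm. With $G=(Z_s+xI)^{-1}$ and $\Delta=A-B$ this gives
\[
\Tr[PA(\ln A-\ln B)]=\int_0^1\!\int_0^\infty\Tr[PAG\Delta G]\,dx\,ds .
\]
The symmetric sandwich $G\Delta G$ permits the substitution $A=G^{-1}-xI+(1-s)\Delta$. Combined with $P\Delta=\Delta P=\Delta_+$, it yields
\[
\Tr[PAG\Delta G]=\Tr[\Delta_+GAG]+x\Tr[(I-P)G\Delta_+G]+x\Tr[PG\Delta_-G].
\]
Each term is a trace of a product of positive operators, so the integrand is nonnegative despite non-commutativity. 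Your resolvent identity instead produces the asymmetric integrand $\Tr[PA(A+s)^{-1}\Delta(B+s)^{-1}]$, for which you found no sign control. Singular $A$ is then handled by replacing $A,B$ with $A+\varepsilon I,\,B+\varepsilon I$, which leaves $\Delta$, and hence $P$, unchanged.
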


The exact $A$-weighted logarithmic form in Lemma~\ref{lem:log}
appears explicitly in 
\cite[Eq.~(A11)]{SharmaWarsi2012}.
Their proof builds on closely related $B$-weighted trace inequalities
for operator-monotone functions; see Audenaert et al.
\cite[Lemma~1]{Audenaert2007} for power functions and Petz
\cite[Theorem~11.18]{Petz2008} for the operator-monotone formulation.
Sharma and Warsi's quantum Chebyshev inequality appears in
\cite[Lemma~6, Eq.~(A12)]{SharmaWarsi2012}.
Cheng and Hsieh restate and use it in
\cite[Lemma~20]{ChengHsieh2018}, with the
hockey-stick converse calculation in
\cite[Eqs.~(A.61)--(A.68)]{ChengHsieh2018}.
Sharma and Warsi's nonnegative-projector convention gives the strict convention
by applying the inequality to $A$ and $B+\varepsilon I$ and letting
$\varepsilon\downarrow0$. In the finite-dimensional case, the projector is then
exactly $\{A-B>0\}$ for all sufficiently small positive $\varepsilon$.
For convenience, Appendix~\ref{app:trace} supplies a direct proof, including singular~$A$.

The projector $\{\rho-\gamma S>0\}$ need not equal
$\{\ln\rho-\ln S>\ln\gamma\}$ when $\rho$ and $S$ do not commute. Lemma~\ref{lem:log} is the
noncommutative replacement for making that identification. It bounds a
weighted logarithmic average on the actual Neyman--Pearson subspace.

\begin{proposition}[Centered quantum Chebyshev and hockey-stick bound]\label{prop:cheb}
Let $\rho$ be a state and $S\geq0$ an operator on a finite-dimensional
Hilbert space, with $\supp\rho\subseteq\supp S$.
Define $D\coloneqq D(\rho\|S)$ and $V\coloneqq V(\rho\|S)$ by
\eqref{eq:DV-step-1}--\eqref{eq:DV}.
For every $t>0$, set $P_t\coloneqq\{\rho-e^{D+t}S>0\}$. Then
\begin{equation}
 E_{e^{D+t}}(\rho\|S)
 \leq\Tr\rho P_t\leq\frac{V}{V+t^2}.
 \label{eq:cheb}
\end{equation}
Consequently, for every effect $T$,
\begin{equation}
 \Tr\rho T\leq\frac{V}{V+t^2}+e^{D+t}\Tr ST.
 \label{eq:cheb-test}
\end{equation}
\end{proposition}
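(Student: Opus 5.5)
The first inequality in \eqref{eq:cheb} follows from the definition of the positive part. Set $\gamma\coloneqq e^{D+t}$ and $H\coloneqq\rho-\gamma S$. Then
\[
E_\gamma(\rho\|S)=\Tr H_+=\Tr P_tH=\Tr\rho P_t-\gamma\Tr SP_t\leq\Tr\rho P_t,
\]
because $\Tr SP_t\geq0$. The consequence \eqref{eq:cheb-test} then follows at once from \eqref{eq:hockey-test} with this $\gamma$. The real work is the Chebyshev-type bound $\Tr\rho P_t\leq V/(V+t^2)$. Throughout, I compress to $\ran Q_S$, where $S>0$, as justified above.

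For the main bound I would apply Lemma~\ref{lem:log} with $A\coloneqq\rho$ and $B\coloneqq\gamma S>0$, so that $P=P_t$. Since $\ln B=\ln S+(D+t)I$, the lemma gives
\[
\Tr[P_t\rho(\ln\rho-\ln S-D I)]\geq t\,\Tr\rho P_t .
\]
Write $K\coloneqq\ln\rho-\ln S-DI$, with $\ln\rho$ extended by zero on $\ker\rho$, and $p\coloneqq\Tr\rho P_t$. The inequality reads $\operatorname{Re}\Tr[P_t\sqrt\rho\,\sqrt\rho K]\geq tp$; the trace is real by the lemma, and in the term $\rho\ln\rho$ the kernel convention is harmless. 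Cauchy--Schwarz in Hilbert--Schmidt, together with \eqref{eq:variance-hs}, yields
\[
tp\leq\|\sqrt\rho P_t\|_2\,\|K\sqrt\rho\|_2=\sqrt{p}\,\sqrt V ,
\]
so $p\leq V/t^2$. This is the unrefined Chebyshev bound, and it already suffices for the strong converse.

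To reach the sharper $V/(V+t^2)$ I would center the projector, and I expect this to be the main obstacle. The key observation is that $\Tr\rho K=0$ by definition of $D$. Hence, for any scalar $c$,
\[
\Tr[(P_t-cI)\rho K]=\Tr[P_t\rho K]\geq tp .
\]
Cauchy--Schwarz then gives
\[
tp\leq\|\sqrt\rho(P_t-cI)\|_2\sqrt V .
\]
Choosing $c=p$ gives
\[
\|\sqrt\rho(P_t-p)\|_2^2=\Tr\rho(P_t-p)^2=p-2p^2+p^2=p(1-p),
\]
using $P_t^2=P_t$ and $\Tr\rho=1$. Thus $t^2p^2\leq p(1-p)V$, i.e.\ $p(t^2+V)\leq V$, which is exactly $p\leq V/(V+t^2)$ (the case $p=0$ being trivial).

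The points to check carefully are the following. First, the real part must be handled correctly when passing from the trace to the Hilbert--Schmidt inner product $\langle\sqrt\rho(P_t-p),K\sqrt\rho\rangle$; this works because the trace equals $\Tr[\sqrt\rho(P_t-p)^\dagger{}^\dagger\cdots]$ by cyclicity. Second, the definition of $K$ on $\ker\rho$ must not matter, which holds because every occurrence is multiplied by $\sqrt\rho$. Third, the unnormalized $S$ plays no role beyond its strict positivity on the compressed space.
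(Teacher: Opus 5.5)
Your proposal is correct and follows the paper's proof essentially step for step. Lemma~\ref{lem:log} applied to $\rho$ and $e^{D+t}S$ gives $t\Tr\rho P_t\leq\Tr P_t\rho K$, and $\Tr\rho K=0$ lets you center $P_t$ at $p$; Cauchy--Schwarz with $P_t^2=P_t$ then gives $t^2p^2\leq Vp(1-p)$, and the first inequality and \eqref{eq:cheb-test} come from the positive-part identity and \eqref{eq:hockey-test} as in the paper. One cosmetic slip: the pairing that reproduces $\Tr(P_t-pI)\rho K$ is $\langle\sqrt\rho(P_t-pI),\sqrt\rho K\rangle_{\rm HS}$ (or the paper's $\langle K\sqrt\rho,(P_t-pI)\sqrt\rho\rangle_{\rm HS}$), not $\langle\sqrt\rho(P_t-pI),K\sqrt\rho\rangle_{\rm HS}$, but since $\norm{\sqrt\rho K}_2=\norm{K\sqrt\rho}_2=\sqrt V$ the bound is unaffected.
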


\begin{proof}
We work on $\supp S$. Set $P\coloneqq P_t$, $f\coloneqq\Tr\rho P$, and
$K\coloneqq\ln\rho-\ln S-DI$. Then $\Tr\rho K=0$ and $\Tr\rho K^2=V$.
Apply Lemma~\ref{lem:log} to $\rho$ and $e^{D+t}S$.
Since $\ln(e^{D+t}S)=(D+t)I+\ln S$, it gives
\begin{equation}
 0\leq\Tr[P\rho(K-tI)]\implies tf\leq\Tr P\rho K.
 \label{eq:projector-threshold}
\end{equation}
For commuting $\rho$ and $S$, this is the usual threshold estimate for
the centered log-likelihood ratio. Lemma~\ref{lem:log} gives the same
weighted estimate for the actual Neyman--Pearson projector even when
the operators do not commute.
The last trace is real and nonnegative. Centering the projector does not
change it, since
\begin{equation}
 \Tr(P-fI)\rho K=\Tr P\rho K-f\Tr\rho K=\Tr P\rho K.
 \label{eq:centered-projector}
\end{equation}
For the Hilbert--Schmidt inner product, cyclicity of trace implies that
\begin{equation}
 \langle K\sqrt\rho,(P-fI)\sqrt\rho\rangle_{\rm HS}
 =\Tr\rho K(P-fI)=\Tr(P-fI)\rho K.
 \label{eq:centered-inner-product}
\end{equation}
Cauchy--Schwarz and $P^2=P$ therefore imply that
\begin{align}
t^2f^2
 &\leq\abs{\Tr(P-fI)\rho K}^2 \label{eq:centered-cauchy-1}\\
&\leq\norm{K\sqrt\rho}_2^2\norm{(P-fI)\sqrt\rho}_2^2 \label{eq:centered-cauchy-2}\\
&=V\Tr\rho(P-fI)^2 \label{eq:centered-cauchy-3}\\
&=V\bigl(\Tr\rho P-2f\Tr\rho P+f^2\Tr\rho\bigr) \label{eq:centered-cauchy-4}\\
&=Vf(1-f). \label{eq:centered-cauchy-5}
\end{align}
If $f=0$, the probability bound holds. Otherwise divide by $f$ to obtain
$t^2f\leq V(1-f)$, or $(V+t^2)f\leq V$. Since $t>0$, division by
$V+t^2$ is valid, including when $V=0$. This proves the second inequality
in~\eqref{eq:cheb}. Its first inequality follows from
\begin{equation}
 E_{e^{D+t}}(\rho\|S)
 =\Tr P\rho-e^{D+t}\Tr PS\leq\Tr P\rho.
 \label{eq:positive-part-acceptance}
\end{equation}
Finally, apply~\eqref{eq:hockey-test} with $\gamma=e^{D+t}$ to obtain
\eqref{eq:cheb-test}.
\end{proof}

Let us note here that the general variance-to-converse strategy
was already used in \cite{SharmaWarsi2012,ChengHsieh2018}.
In relation to Proposition~\ref{prop:cheb}, the cited versions of
quantum Chebyshev feature $V/t^2$ in place of
$V/(V+t^2)$. The latter is the elementary centered-projector refinement
proved above. Replacing
it by the established $V/t^2$ bound throughout would change a denominator
in the final finite-blocklength bound but not the strong-converse
conclusion.

\section{A uniform information-variance bound for balanced loss}\label{sec:variance}

\begin{lemma}[Balanced-channel variance]\label{lem:variance}
Let $R$ be finite dimensional, and let $\ket{\phi}_{RA^n}$ be a pure
state with finite total-photon-number support. Define
\begin{equation}
 \ket{\psi}_{RB^nE^n}\coloneqq(I_R\otimes V_{1/2}^{\otimes n})\ket{\phi},
 \qquad\omega_{RB^n}\coloneqq\Tr_{E^n}\proj{\psi},
 \qquad\sigma_{B^n}\coloneqq\Tr_R\omega_{RB^n}.
 \label{eq:balanced-state}
\end{equation}
Then, with natural logarithms,
\begin{equation}
 D(\omega_{RB^n}\|I_R\otimes\sigma_{B^n})=0,
 \qquad
 V(\omega_{RB^n}\|I_R\otimes\sigma_{B^n})\leq4n.
 \label{eq:balanced-variance}
\end{equation}
The bound is independent of the photon cutoff, the reference dimension,
and correlations among all input modes.
\end{lemma}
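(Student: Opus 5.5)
The plan is to exploit the exchange symmetry $B\leftrightarrow E$ of the balanced beam splitter in bright/dark coordinates. Introduce for each mode $c_j\coloneqq(b_j+e_j)/\sqrt2$ and $d_j\coloneqq(b_j-e_j)/\sqrt2$, with sign conventions matched to~\eqref{eq:isometry}. From the expansion of $(\sqrt{1/2}\,b^\dagger+\sqrt{1/2}\,e^\dagger)^m$, the image $V_{1/2}^{\otimes n}\ket\phi$ is built from bright creation operators only. Hence $d_j\ket\psi=0$ for all $j$, so every dark mode is in vacuum. The swap unitary $S$ exchanging $B^n$ and $E^n$ acts as the dark-mode parity $(-1)^{N_d}$ with $N_d\coloneqq\sum_j d_j^\dagger d_j$, which is the identification deferred to Appendix~\ref{app:dark-parity}. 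Therefore $S\ket\psi=\ket\psi$. Consequently $\psi_{RB^n}$ and $\psi_{RE^n}$ coincide under the identification of mode spaces, and $\sigma_{B^n}$ equals $\psi_{E^n}$ as an operator on the same $n$-mode space. Call this common operator $\sigma$ and set $X\coloneqq\ln\sigma$ on its finite-dimensional support; the cutoff guarantees that all operators have finite support.

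Next, I will rewrite $D$ and $V$ as functionals of a vector. For the pure state $\psi_{RB^nE^n}$, the Schmidt decomposition gives $(\ln\omega_{RB^n}\otimes I_{E^n})\ket\psi=(I_{RB^n}\otimes X_{E^n})\ket\psi$, since $\omega_{RB^n}$ and $\psi_{E^n}=\sigma$ share nonzero spectrum. Hence, with $K\coloneqq\ln\omega_{RB^n}-I_R\otimes X_{B^n}$, the Hilbert--Schmidt form~\eqref{eq:variance-hs} becomes a vector norm: $\norm{K\sqrt\omega}_2^2=\norm{(X_E-X_B)\ket\psi}^2$. Moreover $D=\bra\psi X_E-X_B\ket\psi=0$ by $S$-invariance, which is the first claim. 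Then $V=\norm{(X_E-X_B)\ket\psi}^2$. Using $X_B=SX_ES$ and $S\ket\psi=\ket\psi$, we get $(X_E-X_B)\ket\psi=(I-S)X_E\ket\psi$. Since $(I-S)/2$ is the odd-dark-parity projector and is dominated by $N_d$, this yields
\begin{equation*}
V=4\norm{\tfrac{I-S}{2}X_E\ket\psi}^2\leq4\bra\psi X_EN_dX_E\ket\psi=4\sum_{j=1}^n\norm{d_jX_E\ket\psi}^2=4\sum_{j=1}^n\norm{[d_j,X_E]\ket\psi}^2 .
\end{equation*}
The final equality uses $d_j\ket\psi=0$. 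This is the promised conversion of the logarithmic fluctuation into a commutator estimate via the unit photon gap above the dark vacuum.

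The main obstacle is to bound each term $\norm{[d_j,X_E]\ket\psi}^2$ by $1$ uniformly, independently of energy and of the spectrum of $\sigma$. Since $b_j$ commutes with $X_E$, we have $[d_j,X_E]=-[e_j,X_E]/\sqrt2$. In the Schmidt basis $\ket\psi=\sum_k\sqrt{\lambda_k}\ket{u_k}\ket{v_k}$, the required quantity is $\tfrac12\sum_{k,l}\lambda_k\abs{\bra{v_l}e_j\ket{v_k}}^2(\ln\lambda_k-\ln\lambda_l)^2$. This is not obviously bounded, because logarithms of small eigenvalues blow up. My first attempt is to combine this quantity with its mirror expression on $B$ and use the invariance more cleverly than the crude bound $(I-S)/2\leq N_d$. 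For example, I would rewrite $d_j(X_E-X_B)\ket\psi$ through $d_j^\dagger$ acting on the vacuum dark mode, so that creating one dark photon produces an operator whose matrix elements in the joint Schmidt basis pair $\lambda_k$ with $\lambda_l$ symmetrically. I would then apply an elementary two-variable inequality of the type $\min(x,y)\,(\ln x-\ln y)^2\leq(\sqrt x-\sqrt y)^2\cdot\mathrm{const}$. The remaining unweighted sums would reduce to $\norm{d_j^\dagger\ket\psi}^2=1$ and similar quantities. If the constants in this route come out exactly as intended, summing over $j$ gives $V\leq4n$. I expect the precise choice of this pairing inequality to be where the real work lies.
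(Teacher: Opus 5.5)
Your reduction is correct and is essentially the paper's. The dark vacuum $d_j\ket\psi=0$, the identification of the swap with dark parity, the gap $(I-S)/2\le N_d$, and the passage to commutators reduce the lemma to showing, for each $j$,
$$\norm{[e_j,X_E]\ket\psi}^2=\Tr\sigma[e_j,\ln\sigma]^\dagger[e_j,\ln\sigma]\le2,$$
which is your $\norm{[d_j,X_E]\ket\psi}^2\le1$. That is exactly the paper's estimate~\eqref{eq:log-gradient}. Your bookkeeping through $(I-S)X_E\ket\psi$ gives the same constant as the paper's route through $P_0YP_0=0$ and $[d_i,Y]$.

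The gap is that you do not prove this estimate, which you correctly call the main obstacle, and the route you sketch does not fit it. The target sum $\sum_{k,l}\lambda_k\abs{\bra{v_l}e_j\ket{v_k}}^2(\ln\lambda_k-\ln\lambda_l)^2$ carries the weight $\lambda_k$ of the eigenvector being lowered, not $\min(\lambda_k,\lambda_l)$. An inequality of the form $\min(x,y)(\ln x-\ln y)^2\le C(\sqrt x-\sqrt y)^2$ therefore says nothing about the terms with $\lambda_k\gg\lambda_l$. You also do not identify which sum the normalization $\norm{d_j^\dagger\ket\psi}^2=1$ actually controls.

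The missing idea is to use $\ket{\chi_j}\coloneqq d_j^\dagger\ket\psi$ through its cross partial trace. Since $\bra{\chi_j}=\bra\psi d_j$ and $b_j\ket\psi=e_j\ket\psi$, cyclicity on the traced mode $B_j$ gives $\Tr_{RB^n}\ketbra{\psi}{\chi_j}=(e_j\sigma-\sigma e_j)/\sqrt2$. The Schmidt-basis Cauchy--Schwarz of Lemma~\ref{lem:cross} then yields $\Tr[e_j,\sigma]^\dagger\sigma^+[e_j,\sigma]\le2\norm{\ket{\chi_j}}^2=2$. In your notation this reads $\sum_{k,l}\abs{\bra{v_l}e_j\ket{v_k}}^2(\lambda_k-\lambda_l)^2/\lambda_l\le2$. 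This is what the one-photon normalization controls; it is not a $(\sqrt{\lambda_k}-\sqrt{\lambda_l})^2$ sum.

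The matching scalar inequality is the asymmetric one, $x(\ln x-\ln y)^2\le(x-y)^2/y$, equivalently $r(\ln r)^2\le(r-1)^2$. It handles the weight $\lambda_k$ directly and closes the bound with constant $2$.

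You also need a support fact that your Schmidt-basis formula silently assumes. With $\ln\sigma$ extended by zero, $[e_j,X_E]\ket{v_k}$ contains the term $\ln\lambda_k\,(I-P_\sigma)e_j\ket{v_k}$. This term vanishes only because $(I-P_\sigma)_{E^n}e_j\ket\psi=b_j(I-P_\sigma)_{E^n}\ket\psi=0$, which forces $(I-P_\sigma)e_jP_\sigma=0$. Without this, the kernel terms would contribute $\lambda_k(\ln\lambda_k)^2\norm{(I-P_\sigma)e_j\ket{v_k}}^2$, and these are not controlled by anything you have written.
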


The proof of Lemma~\ref{lem:variance} is the channel-specific analytical
part of our paper and is given in Sections~\ref{sec:log-fluc}--\ref{sec:complete-variance-est}.
We first express the variance as a difference of
logarithms on the receiver and environment. We then explain how the
vacuum in an orthogonal mode combination bounds this difference.

\subsection{The logarithmic fluctuation on the two physical outputs}
\label{sec:log-fluc}
The coefficients of~\eqref{eq:isometry} at $t=1/2$ are invariant under
interchanging $k$ and $m-k$. Therefore the balanced isometry is fixed by
receiver--environment exchange, including on inputs correlated with $R$
and across modes. Identifying the output mode spaces, we have
\begin{equation}
 \psi_{B^n}=\psi_{E^n}=\sigma.
 \label{eq:equal-marginals}
\end{equation}
Every state in this calculation has finite-dimensional support. The
support inclusion
$\supp\omega_{RB^n}\subseteq R\otimes\supp\sigma$ follows from
positivity: if $P_\sigma$ is the support projection of $\sigma$, then
\begin{equation}
 \Tr\omega[I_R\otimes(I-P_\sigma)]
 =\Tr\sigma(I-P_\sigma)=0.
 \label{eq:marginal-support}
\end{equation}
Writing $Q_\perp\coloneqq I_R\otimes(I-P_\sigma)$, the preceding trace equals
$\norm{Q_\perp\sqrt\omega}_2^2$. Thus $Q_\perp\sqrt\omega=0$,
which implies $Q_\perp\omega=\omega Q_\perp=0$ and proves the inclusion.

Because $\ket{\psi}$ is pure, its $RB^n$ and $E^n$ marginals have the same
nonzero eigenvalues. Thus
\begin{align}
D(\omega\|I_R\otimes\sigma)
 &=\Tr\omega\ln\omega-\Tr\sigma\ln\sigma \label{eq:balanced-mean-1}\\
&=-H(RB^n)_\psi+H(B^n)_\psi \label{eq:balanced-mean-2}\\
&=-H(E^n)_\psi+H(B^n)_\psi=0. \label{eq:balanced-mean-3}
\end{align}
Here $H(C)_\psi\coloneqq H(\psi_C)$, with the von Neumann entropy defined in~\eqref{eq:vNeu-ent}.
The difference $H(B^n)_\psi-H(RB^n)_\psi$ is the coherent information
from the reference to the receiver, or equivalently minus the conditional
entropy $H(R|B^n)_\psi$.
The equality of entropies here uses only purity and~\eqref{eq:equal-marginals}.

Write a Schmidt decomposition across $RB^n:E^n$ as
$\ket{\psi}=\sum_j\sqrt{\lambda_j}\,\ket{u_j}_{RB^n}\otimes\ket{v_j}_{E^n}$, using only positive
Schmidt coefficients. Applying either $\ln\omega$ on $RB^n$ or
$\ln\sigma$ on $E^n$ multiplies its $j$th term by $\ln\lambda_j$.
Thus
\begin{equation}
 (\ln\omega\otimes I_{E^n})\ket{\psi}
 =(I_{RB^n}\otimes\ln\sigma_{E^n})\ket{\psi}.
 \label{eq:schmidt-logarithm}
\end{equation}
Define the self-adjoint auxiliary operator
\begin{equation}
 Y\coloneqq\ln\sigma_{B^n}-\ln\sigma_{E^n},
 \label{eq:Y}
\end{equation}
where each logarithm is taken on the positive support and extended by
zero on its kernel. It follows that
\begin{equation}
 V(\omega\|I_R\otimes\sigma)=\norm{Y\ket{\psi}}^2.
 \label{eq:variance-vector}
\end{equation}
The two terms defining $Y$ commute because they act on different
physical outputs. No commutation of $\omega$ with $I_R\otimes\sigma$
is asserted or used.

\subsection{Bright--dark coordinates and the photon-number gap}

Our next goal is to bound $\norm{Y\ket{\psi}}^2$. To do so, we identify a
combination of the output modes that remains in the vacuum, regardless
of the input state. This combination provides a photon-number operator
with a known spectral gap. We first define the required mode coordinates
and then apply that gap to $Y\ket{\psi}$.

For each $i\in\{1,\ldots,n\}$, let $b_i$ and $e_i$ be the annihilation
operators of the receiver and environment modes. Define
\begin{equation}
 a_{{\rm br},i}\coloneqq\frac{b_i+e_i}{\sqrt2},
 \qquad d_i\coloneqq\frac{b_i-e_i}{\sqrt2}.
 \label{eq:modes}
\end{equation}
We call these the \emph{bright} and \emph{dark} modes, respectively,
following standard terminology in quantum optics and multimode systems
\cite{Benisty2009,Lahteenmaki2016}. In the present setting, the names have
a direct physical interpretation:
the bright modes contain the original signal, whereas the dark modes
contain the vacuum that entered the unused beam-splitter ports.

To verify this interpretation, let $v_i$ denote the annihilation operator
at the unused input port, and let $U_i\colon A_iV_i\to B_iE_i$ be the
two-mode beam-splitter unitary. The real orthogonal beam-splitter matrix
consistent with~\eqref{eq:isometry} gives the Heisenberg relations
\begin{equation}
 U_i^\dagger b_iU_i=\frac{a_i+v_i}{\sqrt2},
 \qquad U_i^\dagger e_iU_i=\frac{a_i-v_i}{\sqrt2}.
 \label{eq:balanced-two-input-transform}
\end{equation}
Conjugating~\eqref{eq:modes} by $U_i$ therefore gives
$U_i^\dagger a_{{\rm br},i}U_i=a_i$ and $U_i^\dagger d_iU_i=v_i$.
Thus the change from physical output coordinates to bright
and dark coordinates is the inverse beam-splitter transformation.
It is a change of coordinates on the joint output, used to analyze the
state; it requires no additional operation in the communication protocol.

Under the corresponding Fock-space identification, the state has the form
\begin{equation}
 \ket{\psi}_{R A_{\rm br}^nD^n}
 =\ket{\phi}_{R A_{\rm br}^n}\otimes\ket{0,\ldots,0}_{D^n}.
 \label{eq:bright-dark-state-factorization}
\end{equation}
The input system $A^n$ has been identified with the bright system
$A_{\rm br}^n$. Eq.~\eqref{eq:bright-dark-state-factorization}
holds also when the input modes are entangled with each other and with
$R$. In particular,
\begin{equation}
 d_i\ket{\psi}=0,
 \qquad b_i\ket{\psi}=e_i\ket{\psi}
 \quad\text{for every }i.
 \label{eq:vacuum-relation}
\end{equation}

Let $P_0$ project onto the joint dark-mode vacuum, with the bright modes
unrestricted, and let $N_{\rm dk}$ count the total number of dark photons:
\begin{equation}
 P_0\coloneqq I_{A_{\rm br}^n}\otimes\proj{0,\ldots,0}_{D^n},
 \qquad N_{\rm dk}\coloneqq\sum_{i=1}^n d_i^\dagger d_i.
 \label{eq:dark-number}
\end{equation}
Identity operators on $R$ are implicit. Then $P_0\ket{\psi}=\ket{\psi}$.
On a dark-mode Fock vector,
\begin{equation}
 N_{\rm dk}\ket{m_1,\ldots,m_n}
 =\left(\sum_{i=1}^n m_i\right)\ket{m_1,\ldots,m_n}.
 \label{eq:dark-number-spectrum}
\end{equation}
The eigenvalue of $N_{\rm dk}$ is equal to zero on the joint vacuum and at least one on the
orthogonal complement. Consequently, the following operator inequality holds:
\begin{equation}
 N_{\rm dk}\geq I-P_0.
 \label{eq:number-gap}
\end{equation}
This operator inequality will be used in \eqref{eq:gap-gradient-step-1} below.

To apply~\eqref{eq:number-gap}, we show that $Y\ket{\psi}$ has no dark-vacuum
component. Let $S_{BE}$ be the unitary that exchanges all receiver and
environment modes. In the coordinates in~\eqref{eq:modes}, it acts as
\begin{equation}
 S_{BE}a_{{\rm br},i}S_{BE}^\dagger=a_{{\rm br},i},
 \qquad S_{BE}d_iS_{BE}^\dagger=-d_i.
 \label{eq:swap-bright-dark}
\end{equation}
It fixes the joint vacuum as well. Since receiver--environment exchange leaves each bright-mode creation
operator invariant and sends each dark-mode creation operator to its
negative, every dark-mode creation operator contributes a factor of $-1$
under the swap. Consequently, a bright--dark Fock state containing a
total of $m$ dark photons acquires the phase $(-1)^m$. Since
$N_{\rm dk}$ counts the total number of dark photons, receiver--environment
exchange realizes the dark-mode photon-number parity operator:
\begin{equation}
 S_{BE}=(-1)^{N_{\rm dk}}.
 \label{eq:swap-dark-parity}
\end{equation}
A direct Fock-basis verification of~\eqref{eq:swap-dark-parity} is given
in Appendix~\ref{app:dark-parity}.
We then conclude that
\begin{equation}
 S_{BE}P_0=P_0S_{BE}=P_0.
 \label{eq:swap-P0}
\end{equation}
The receiver and environment marginals are equal by
\eqref{eq:equal-marginals}, so exchange interchanges the two terms in
$Y=\ln\sigma_{B^n}-\ln\sigma_{E^n}$. Hence
\begin{equation}
 S_{BE}YS_{BE}^\dagger=-Y.
 \label{eq:Y-odd}
\end{equation}
Combining~\eqref{eq:swap-P0} and~\eqref{eq:Y-odd} gives
\begin{equation}
 P_0YP_0=P_0S_{BE}YS_{BE}^\dagger P_0=-P_0YP_0,
 \quad\text{and therefore}\quad P_0YP_0=0.
 \label{eq:P0YP0-zero}
\end{equation}
Since $P_0\ket{\psi}=\ket{\psi}$, it follows that
\begin{equation}
 P_0Y\ket{\psi}=0.
 \label{eq:Ypsi-orthogonal}
\end{equation}
Equivalently, $Y$ takes the even-parity vector $\ket{\psi}$ into the odd
dark-parity sector, which is orthogonal to the dark vacuum.

We now apply the operator inequality in \eqref{eq:number-gap}:
\begin{align}
 \norm{Y\ket{\psi}}^2
 &=\bra{\psi}Y(I-P_0)Y\ket{\psi}
 \label{eq:gap-gradient-step-0}\\
 &\leq\bra{\psi}YN_{\rm dk}Y\ket{\psi}
 \label{eq:gap-gradient-step-1}\\
 &=\sum_{i=1}^n\norm{d_iY\ket{\psi}}^2
 \label{eq:gap-gradient-step-4}\\
 &=\sum_{i=1}^n\norm{[d_i,Y]\ket{\psi}}^2.
 \label{eq:gap-gradient}
\end{align}
The first equality uses~\eqref{eq:Ypsi-orthogonal}; the inequality uses
\eqref{eq:number-gap}; the next equality uses the definition of
$N_{\rm dk}$; and the last uses $d_i\ket{\psi}=0$ from
\eqref{eq:vacuum-relation}. Together with~\eqref{eq:variance-vector},
this proves
\begin{equation}
 V(\omega\|I_R\otimes\sigma)
 \leq\sum_{i=1}^n\norm{[d_i,Y]\ket{\psi}}^2.
 \label{eq:variance-gradient-bound}
\end{equation}
Thus we have bounded the logarithmic fluctuation by a sum
of squared commutator norms. It remains to bound
each summand independently of the input energy. The next subsection
obtains that bound by creating one photon in a dark mode.

All of these quadratic forms are well-defined under the finite-support
assumption. If the input has total photon number at most $L$, then each
reduced state is supported on its local $L$-photon cutoff. Its logarithm,
extended by zero on its kernel, maps into that finite-dimensional space.
Applying either logarithm to $\ket{\psi}$ therefore produces a vector with
total photon number at most $2L$ on $B^nE^n$. Passive changes of mode
coordinates preserve total photon number, so $Y\ket{\psi}$ belongs to the
domains of every $d_i$ and of $N_{\rm dk}$. The proof uses the physical
Fock-space operators throughout.

\subsection{A one-photon estimate for the logarithmic gradient}
The commutators in~\eqref{eq:variance-gradient-bound} involve
logarithms of the reduced state. We first control commutators with the
state itself and then compare their matrix elements with those of the
logarithm. The following partial-trace inequality supplies the first step.

\begin{lemma}[A cross partial trace]\label{lem:cross}
Let $\ket{u}$ be a unit vector on $B\otimes Z$, let $\ket{v}$ be another
vector, and suppose $\xi_B\coloneqq\Tr_Z[\proj{u}]$ has finite rank. Set
$C\coloneqq\Tr_Z[\ketbra{u}{v}]$. If $\xi_B^+$ denotes the inverse on its positive
support and zero on its kernel, then
\begin{equation}
 \ran C\subseteq\supp\xi_B,
 \qquad \Tr C^\dagger\xi_B^+C\leq\norm{\ket{v}}^2.
 \label{eq:cross}
\end{equation}
\end{lemma}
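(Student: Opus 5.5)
The plan is to reduce both claims to a Schmidt decomposition of $\ket{u}$ across $B\otimes Z$. Write $\ket{u}=\sum_j\sqrt{\mu_j}\,\ket{x_j}_B\ket{y_j}_Z$, where the $\mu_j>0$ are finitely many Schmidt coefficients. Finiteness follows because $\xi_B=\sum_j\mu_j\proj{x_j}$ has finite rank. The families $\{x_j\}$ and $\{y_j\}$ are orthonormal. In this notation $\supp\xi_B=\operatorname{span}\{x_j\}$ and $\xi_B^+=\sum_j\mu_j^{-1}\proj{x_j}$.

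For the range inclusion, define the vectors $\ket{w_j}\coloneqq(\bra{y_j}_Z\otimes I_B)\ldots$ only implicitly. It is cleaner to compute $C$ directly. For any $\ket{\beta}\in B$, we have $C\ket{\beta}=\Tr_Z[\ket{u}\langle v|\,(\ket\beta\otimes I_Z)]$. Expanding $\ket{u}$ gives
\begin{equation*}
 C=\sum_j\sqrt{\mu_j}\,\ket{x_j}\bra{c_j},
\end{equation*}
where $\bra{c_j}\coloneqq(I_B\otimes\bra{y_j}_Z)$ applied to $\bra{v}$. Equivalently, $\ket{c_j}\coloneqq(I_B\otimes\bra{y_j})\ket{v}\in B$ is the partial inner product of $\ket v$ with $\ket{y_j}$ on $Z$. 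I will verify this identity by evaluating matrix elements $\bra{\alpha}C\ket{\beta}=\sum_k\langle\alpha,z_k|u\rangle\langle v|\beta,z_k\rangle$ over an orthonormal basis $\{z_k\}$ of $Z$, and then resumming with Parseval. The sums converge absolutely because only finitely many $j$ occur. The expression shows immediately that $\ran C\subseteq\operatorname{span}\{x_j\}=\supp\xi_B$.

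For the inequality, substitute into the trace. By orthonormality of the $x_j$,
\begin{equation*}
 C^\dagger\xi_B^+C=\sum_{j,k}\sqrt{\mu_j\mu_k}\,\ket{c_j}\bra{x_j}\xi_B^+\ket{x_k}\bra{c_k}=\sum_j\proj{c_j}.
\end{equation*}
Hence $\Tr C^\dagger\xi_B^+C=\sum_j\norm{\ket{c_j}}^2$. It remains to show that $\sum_j\norm{(I_B\otimes\bra{y_j})\ket v}^2\leq\norm{\ket v}^2$. Since the $y_j$ are orthonormal, $\sum_j\norm{(I_B\otimes\bra{y_j})\ket v}^2=\bra{v}(I_B\otimes\Pi_y)\ket{v}$, where $\Pi_y\coloneqq\sum_j\proj{y_j}$ is a projection. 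This is a Bessel-type inequality, and it gives the bound.

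I expect no real obstacle. The only care point is infinite dimensionality: $B$ and $Z$ may be infinite-dimensional, so the partial traces of the rank-one trace-class operators $\proj u$ and $\ketbra uv$ must be handled through matrix elements over a basis of $Z$. This matters only for justifying the formula for $C$. Finite rank of $\xi_B$ makes every remaining sum finite.
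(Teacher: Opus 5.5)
Your proposal is correct and is essentially the paper's proof: the Schmidt expansion of $\ket{u}$, the vectors $(I_B\otimes\bra{y_j})\ket{v}$ (the paper's $\ket{w_j}$), the identity $C^\dagger\xi_B^+C=\sum_j\proj{c_j}$, and the bound via the projection $I_B\otimes\sum_j\proj{y_j}$. Your matrix-element check of the formula for $C$ over a basis of $Z$ is a reasonable added detail, which the paper leaves to ``the definition of partial trace.''
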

\begin{proof}
Schmidt-expand $\ket{u}=\sum_j\sqrt{\lambda_j}\,\ket{u_j}\otimes\ket{z_j}$ with
$\lambda_j>0$ and orthonormal sets $\{\ket{u_j}\}$ and $\{\ket{z_j}\}$.
Define $\ket{w_j}\coloneqq(I_B\otimes\bra{z_j})\ket{v}$. By the definition of partial trace,
\begin{equation}
 C=\sum_j\sqrt{\lambda_j}\ketbra{u_j}{w_j},
 \qquad
 C^\dagger\xi_B^+C=\sum_j\proj{w_j}.
 \label{eq:schmidt-cross}
\end{equation}
The first identity proves the range inclusion. The second gives
\begin{equation}
 \Tr C^\dagger\xi_B^+C=\sum_j\norm{\ket{w_j}}^2
 =\bra{v}I_B\otimes\Bigl(\sum_j\proj{z_j}\Bigr)\ket{v}
 \leq\norm{\ket{v}}^2,
 \label{eq:cross-trace}
\end{equation}
since the sum in parentheses is an orthogonal projection. This proves
\eqref{eq:cross}.
\end{proof}

Return to the balanced-channel vector $\ket{\psi}_{RB^nE^n}$ and its
receiver marginal $\sigma_{B^n}$ defined in~\eqref{eq:balanced-state}.
We continue to write $\sigma$ for this marginal. For each $i$, create one
photon in the dark mode:
\begin{equation}
 \ket{\chi_i}\coloneqq d_i^\dagger\ket{\psi}.
 \label{eq:one-photon-vector}
\end{equation}
The canonical commutation relation and~\eqref{eq:vacuum-relation} give
\begin{equation}
 \norm{\ket{\chi_i}}^2
 =\bra{\psi}d_id_i^\dagger\ket{\psi}
 =1+\bra{\psi}d_i^\dagger d_i\ket{\psi}=1.
 \label{eq:one-photon-norm}
\end{equation}
These vectors have total photon number at most $L+1$. Set
$C_i\coloneqq\Tr_{RE^n}\ketbra{\psi}{\chi_i}$. Since
$\bra{\chi_i}=\bra{\psi} d_i$, partial-trace cyclicity on the traced
mode $E_i$ gives
\begin{align}
C_i
 &=\frac1{\sqrt2}\left(\sigma b_i-
       \Tr_{RE^n}[\proj{\psi}\, e_i]\right) \label{eq:one-photon-cross-1}\\
&=\frac1{\sqrt2}\left(\sigma b_i-
       \Tr_{RE^n}[e_i\proj{\psi}]\right) \label{eq:one-photon-cross-2}\\
&=\frac{\sigma b_i-b_i\sigma}{\sqrt2}. \label{eq:one-photon-cross-3}
\end{align}
The last equality uses $e_i\ket{\psi}=b_i\ket{\psi}$ from~\eqref{eq:vacuum-relation}.
Every product in this identity is finite rank and acts on a finite
photon-number subspace enlarged by one photon if necessary. The cyclicity
step therefore reduces to a finite matrix identity.

Lemma~\ref{lem:cross}, with the vectors $\ket{\psi}$ and $\ket{\chi_i}$ and traced system
$RE^n$, now implies
\begin{equation}
 \ran[b_i,\sigma]\subseteq\supp\sigma,
 \qquad
 \Tr[b_i,\sigma]^\dagger\sigma^+[b_i,\sigma]\leq2.
 \label{eq:inverse-gradient}
\end{equation}
The inverse in~\eqref{eq:inverse-gradient} acts on the range of
$[b_i,\sigma]$. This placement will match the denominator in the
eigenvalue comparison below.

We next justify the use of logarithms. Let $P_\sigma$ denote the support
projection of $\sigma$. The range statement in~\eqref{eq:inverse-gradient} gives
$(I-P_\sigma)[b_i,\sigma]=0$. Since $(I-P_\sigma)\sigma=0$, this means
$(I-P_\sigma)b_i\sigma=0$. Multiplying on the right by $\sigma^+$ gives
\begin{equation}
 (I-P_\sigma)b_iP_\sigma=0.
 \label{eq:support-invariance}
\end{equation}
Thus lowering cannot take a vector in the positive support of $\sigma$
into its kernel.

For $r>0$, the scalar inequality
\begin{equation}
 r(\ln r)^2\leq(r-1)^2
 \label{eq:scalar-log-bound}
\end{equation}
is verified by setting $x\coloneqq\tfrac12\ln r$, so that $r=e^{2x}$. After division by $4e^{2x}$, the inequality in \eqref{eq:scalar-log-bound} is equivalent to
$x^2\leq\sinh^2x$. For $x\geq0$, $\sinh x=\int_0^x\cosh y\,dy\geq x$;
oddness gives the result for $x<0$. Therefore, for strictly positive eigenvalues
$\lambda_j,\lambda_k$ of $\sigma$, setting $r=\frac{\lambda_j}{\lambda_k}$ in \eqref{eq:scalar-log-bound} leads to
\begin{equation}
 \lambda_j(\ln\lambda_j-\ln\lambda_k)^2
 \leq\frac{(\lambda_j-\lambda_k)^2}{\lambda_k}.
 \label{eq:scalar-eigenvalue-bound}
\end{equation}
For the following matrix sums, choose an orthonormal basis $\{\ket{u_j}\}$ of
the subspace with at most $L+1$ photons on $B^n$ such that
$\sigma\ket{u_j}=\lambda_j\ket{u_j}$, including the zero eigenvalues.
Both $b_i\sigma$ and $\sigma b_i$ vanish outside
this finite input and output window. The additional zero-eigenvalue
indices are important: $\sigma b_i$ can be nonzero on the $(L+1)$-photon
shell even though $\sigma$ itself has support below that shell.
Applying~\eqref{eq:support-invariance}, we obtain
\begin{equation}
 \begin{aligned}
 \Tr\sigma[b_i,\ln\sigma]^\dagger[b_i,\ln\sigma]
 &=
 \sum_{j,k:\lambda_j,\lambda_k>0}
 \lambda_j\abs{\bra{u_k}b_i\ket{u_j}}^2
 (\ln\lambda_j-\ln\lambda_k)^2\\
 &\quad\leq
 \sum_{j,k:\lambda_j,\lambda_k>0}
 \frac{\abs{\bra{u_k}b_i\ket{u_j}}^2
       (\lambda_j-\lambda_k)^2}{\lambda_k}\\
 &\quad\leq
 \sum_{k:\lambda_k>0}\sum_j
 \frac{\abs{\bra{u_k}b_i\ket{u_j}}^2
       (\lambda_j-\lambda_k)^2}{\lambda_k}\\
 &\quad=\Tr[b_i,\sigma]^\dagger\sigma^+[b_i,\sigma]\leq2.
 \end{aligned}
 \label{eq:log-gradient-chain}
\end{equation}
In the first sum, terms with $\lambda_j>0$ and $\lambda_k=0$ vanish by
\eqref{eq:support-invariance}; terms with $\lambda_j=0$ have zero weight.
The second inequality adds only nonnegative terms. We have proved
\begin{equation}
 \Tr\sigma[b_i,\ln\sigma]^\dagger[b_i,\ln\sigma]\leq2.
 \label{eq:log-gradient}
\end{equation}
This argument also shows that the weighted expression is unaffected by
the extension of $\ln\sigma$ to its kernel.

\subsection{Completion of the variance estimate}
\label{sec:complete-variance-est}
From~\eqref{eq:Y} and~\eqref{eq:modes}, operators acting on different output
systems commute, and hence
\begin{equation}
 [d_i,Y]=\frac{[b_i,\ln\sigma_{B^n}]+[e_i,\ln\sigma_{E^n}]}{\sqrt2}.
 \label{eq:dark-log-commutator}
\end{equation}
For any vectors $\ket{v}$ and $\ket{w}$, expanding the squared norm and using
$2\operatorname{Re}\langle v|w\rangle\leq\norm{\ket{v}}^2+\norm{\ket{w}}^2$ proves
$\norm{(\ket{v}+\ket{w})/\sqrt2}^2\leq\norm{\ket{v}}^2+\norm{\ket{w}}^2$.
Consequently, by~\eqref{eq:equal-marginals} and~\eqref{eq:log-gradient},
\begin{align}
\norm{[d_i,Y]\ket{\psi}}^2
 &\leq\norm{[b_i,\ln\sigma_{B^n}]\ket{\psi}}^2
        +\norm{[e_i,\ln\sigma_{E^n}]\ket{\psi}}^2 \label{eq:commutator-sum-1}\\
&=2\Tr\sigma[b_i,\ln\sigma]^\dagger[b_i,\ln\sigma]\leq4. \label{eq:commutator-sum-2}
\end{align}
Summing in~\eqref{eq:gap-gradient} and using~\eqref{eq:variance-vector}
gives $V(\omega\|I_R\otimes\sigma)\leq4n$. The zero-mean identity was
already proved, completing the proof of Lemma~\ref{lem:variance}.

\begin{corollary}[Testing a balanced-channel output]\label{cor:balanced}
For the state in Lemma~\ref{lem:variance}, every effect $T$ on $RB^n$ and
every $a>0$ satisfy
\begin{equation}
 \Tr\omega T\leq\frac{4n}{4n+a^2}
       +e^a\Tr[(I_R\otimes\sigma)T].
 \label{eq:balanced-test}
\end{equation}
\end{corollary}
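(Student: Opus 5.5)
This is a direct combination of Proposition~\ref{prop:cheb} with Lemma~\ref{lem:variance}; no new analytic input should be needed. We set $\rho\coloneqq\omega_{RB^n}$ and $S\coloneqq I_R\otimes\sigma_{B^n}$. Both act on a finite-dimensional space: $R$ is finite dimensional, and the finite photon support of $\ket{\phi}$ confines $\sigma$ to a finite photon-number window of $B^n$. We regard $T$ as an effect on this space, or compress it to it. The support hypothesis $\supp\omega\subseteq\supp(I_R\otimes\sigma)$ of Proposition~\ref{prop:cheb} was already verified in~\eqref{eq:marginal-support}.

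Next, Lemma~\ref{lem:variance} gives $D(\omega\|I_R\otimes\sigma)=0$ and $V\coloneqq V(\omega\|I_R\otimes\sigma)\leq 4n$. Applying~\eqref{eq:cheb-test} with $t\coloneqq a>0$ yields
\begin{equation*}
\Tr\omega T\leq \frac{V}{V+a^2}+e^{0+a}\Tr[(I_R\otimes\sigma)T].
\end{equation*}
The map $x\mapsto x/(x+a^2)$ is nondecreasing on $x\geq0$. Hence $V/(V+a^2)\leq 4n/(4n+a^2)$, which is~\eqref{eq:balanced-test}.

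The only point requiring any care is the compression to the finite support. If $T$ is given on the full infinite-dimensional $RB^n$, we replace it by $QTQ$, where $Q$ projects onto $R\otimes\ran P_\sigma$. Both $\omega$ and $I_R\otimes\sigma$ are supported in $\ran Q$, so $\Tr\omega T=\Tr\omega QTQ$ and $\Tr[(I_R\otimes\sigma)T]=\Tr[(I_R\otimes\sigma)QTQ]$. Moreover, $QTQ$ is an effect on $\ran Q$, so the finite-dimensional Proposition~\ref{prop:cheb} applies. I expect no genuine obstacle beyond this bookkeeping.
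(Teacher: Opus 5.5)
Your proposal is correct and follows the paper's proof: apply \eqref{eq:cheb-test} with $S=I_R\otimes\sigma$, $D=0$ and $t=a$, then use $V\leq4n$ together with the monotonicity of $x\mapsto x/(x+a^2)$. Your explicit compression of $T$ to $R\otimes\ran P_\sigma$ is the same finite-support reduction the paper carries out in Section~\ref{sec:testing} before Proposition~\ref{prop:cheb}, so this bookkeeping is sound.
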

\begin{proof}
Apply~\eqref{eq:cheb-test} with $S\coloneqq I_R\otimes\sigma$, for which $D=0$, and evaluate it at $t=a$.
The function $x\mapsto x/(x+a^2)$ is increasing for $x\geq0$, because
its derivative is $a^2/(x+a^2)^2$. Lemma~\ref{lem:variance} therefore
bounds the first term by $4n/(4n+a^2)$, thus proving~\eqref{eq:balanced-test}.
\end{proof}

\section{The decoder comparison trace at general transmissivity}\label{sec:transport}

\subsection{Two factorizations of the same Gaussian isometry}
Corollary~\ref{cor:balanced} supplies a test bound for balanced loss.
We now apply the factorization from Section~\ref{sec:balanced-overview} to a general pure-loss
channel and compute the comparison weight of the decoder test.
Fix $1/2\leq\eta<1$ and set
\begin{equation}
 c\coloneqq1-\eta,\qquad \tau\coloneqq\frac c\eta\in(0,1],
 \qquad \nu\coloneqq2\eta-1\geq0.
 \label{eq:parameters}
\end{equation}
With the parameters in~\eqref{eq:parameters}, let
$W\coloneqq V_\tau^{\otimes n}$, with $W\colon B^n\to E'^nF^n$, where $E'$ receives the
transmitted fraction $\tau$. The channel obtained by tracing out $F^n$
is the degrading attenuator identified in~\cite[Fig.~1]{GuhaShapiroErkmen2008}.
The composed isometry, one mode at a time,
acts on the input creation operator $a^\dagger$ as
\begin{equation}
 a^\dagger\longmapsto
 \sqrt c\,e^\dagger+\sqrt c\,e'^\dagger+\sqrt \nu\,f^\dagger.
 \label{eq:three-arms}
\end{equation}
Indeed, the original receiver amplitude $\sqrt\eta$ is split into
$\sqrt{\eta\tau}=\sqrt c$ and
$\sqrt{\eta(1-\tau)}=\sqrt \nu$.
The same map is obtained by first splitting $A$ into $G$ and $F$ with fractions
$2c$ and $\nu$, respectively, and then applying a balanced splitter to $G$:
\begin{equation}
 a^\dagger\longmapsto\sqrt{2c}\,g^\dagger+\sqrt \nu\,f^\dagger,
 \qquad g^\dagger\longmapsto(e^\dagger+e'^\dagger)/\sqrt2.
 \label{eq:balanced-factorization}
\end{equation}
See Figure~\ref{fig:balanced-decomposition}. This gives the bosonic
realization of Morgan and Winter's symmetric-channel construction
discussed in 
\cite[Section~VII]{MorganWinter2014}. The isometry on $G$ is exactly
$V_{1/2}\colon G\to EE'$, and exchanging $E$ and $E'$ leaves it unchanged.
Tracing either output therefore gives the same balanced pure-loss
channel. The factorization is established below directly on Fock space;
we do not apply their finite-dimensional strong-converse reduction
theorem to the bosonic channel.

For a direct check at the level of isometries, let $u,v,w\geq0$ with
$u+v+w=m$. In the original channel followed by $W$, the coefficient of
$\ket u_E\ket v_{E'}\ket w_F$ is
\begin{equation}
 \sqrt{\binom m u\binom{m-u}v}\,
 c^{u/2}\eta^{(v+w)/2}\tau^{v/2}(1-\tau)^{w/2}
 =\sqrt{\frac{m!}{u!v!w!}}\,c^{(u+v)/2}\nu^{w/2}.
 \label{eq:first-factorization-coefficient}
\end{equation}
In the other ordering it is
\begin{equation}
 \sqrt{\binom m w\binom{m-w}u}\,
 (2c)^{(u+v)/2}\nu^{w/2}2^{-u/2}2^{-v/2}
=\sqrt{\frac{m!}{u!v!w!}}\,c^{(u+v)/2}\nu^{w/2}.
 \label{eq:second-factorization-coefficient}
\end{equation}
The coefficients agree on every Fock vector. Equality extends to the
full isometries by linearity and continuity, and tensoring preserves
it for correlated multimode inputs. At $\eta=1/2$, only the $w=0$ terms
remain: $F$ is a fixed vacuum mode and the first splitter is the identity
on $G$. Thus this endpoint is included.

For a pure encoded input, let
\begin{equation}
 \ket{\psi}_{RB^nE^n}\coloneqq(I_R\otimes V_\eta^{\otimes n})\ket{\phi}_{RA^n},
 \qquad
 \ket{\zeta}_{RE^nE'^nF^n}\coloneqq(I_{RE^n}\otimes W)\ket{\psi}.
 \label{eq:three-output-state}
\end{equation}
We use $\psi$ and $\zeta$ with subsystem subscripts for the corresponding
density operators and their marginals. By~\eqref{eq:three-arms}, $\zeta$ is an actual
balanced-channel output with composite reference $RF^n$, receiver $E'^n$, and
environment $E^n$. The state before that balanced splitter is pure on
$RF^nG^n$. If the encoded input has finite photon support, the reference
$RF^n$ may be restricted to a finite-dimensional space containing its
support. Corollary~\ref{cor:balanced} applies to every effect on this
reference and $E'^n$.

\subsection{The attenuator identity on bounded operators}

The identity $\cL_\tau(I)=\tau^{-1}I$ was established in the operator-sum
analysis of Ivan, Sabapathy, and Simon~\cite[Eq.~(5.6), arXiv version]{IvanSabapathySimon2011}.
Their attenuation amplitude $\kappa'$ satisfies $\tau=(\kappa')^2$,
so their factor $(\kappa')^{-2}$ is $\tau^{-1}$ in our transmissivity
convention.

We include a derivation here in order to explain how this identity is used, despite
the fact that the identity operator on Fock space is not trace class.
The loss Kraus operators for transmissivity~$\tau$ are given by
\begin{equation}
 K_\ell\ket m\coloneqq\begin{cases}
 \sqrt{\binom m\ell}\,\tau^{(m-\ell)/2}(1-\tau)^{\ell/2}
 \ket{m-\ell},&m\geq\ell,\\
 0,&m<\ell.
 \end{cases}
 \label{eq:loss-kraus}
\end{equation}
This follows by taking $\bra\ell_F$ in~\eqref{eq:isometry} for $V_\tau$.
The operators $K_\ell K_\ell^\dagger$ are diagonal. For each output
number $k$, their diagonal entries sum to
\begin{align}
\sum_{\ell=0}^\infty\langle k|K_\ell K_\ell^\dagger|k\rangle
 &=\sum_{\ell=0}^\infty\binom{k+\ell}{\ell}\tau^k(1-\tau)^\ell \label{eq:identity-kraus-1}\\
&=\tau^k\tau^{-k-1}=\tau^{-1}. \label{eq:identity-kraus-2}
\end{align}
The generating function used on the second line follows by taking $k$
derivatives of $\sum_{j\geq0}x^j=(1-x)^{-1}$ and dividing by $k!$, i.e.,
\begin{equation}
 \sum_{\ell\geq0}\binom{k+\ell}{\ell}x^\ell=(1-x)^{-k-1},
 \qquad |x|<1.
 \label{eq:negative-binomial}
\end{equation}
The increasing positive diagonal sums are bounded by $\tau^{-1}I$ and
converge strongly. Consequently the Schr\"odinger-picture Kraus map has
the bounded value
\begin{equation}
 \cL_\tau^{\otimes n}(I^{\otimes n})=\tau^{-n}I^{\otimes n}.
 \label{eq:identity}
\end{equation}
For a bounded positive operator $Z$, its partial Kraus sums increase
and are bounded from above by $\norm Z_\infty\tau^{-1}I$, so they converge
strongly to a bounded operator. The resulting positive map is normal:
for an increasing bounded family of positive inputs, monotone convergence
allows the input limit and the nonnegative Kraus sum to be interchanged
in every vector expectation. This defines the extension used in
\eqref{eq:identity}. This is different
from the identity $\cL_\tau^\dag(I)=I$ for the Heisenberg adjoint.

For every finite-rank
input projection $\Pi$, positivity and $\Pi\leq I$ give
\begin{equation}
 \Tr_{F^n}[W\Pi W^\dagger]
 =\cL_\tau^{\otimes n}(\Pi)\leq\tau^{-n}I_{E'^n}.
 \label{eq:partial-range}
\end{equation}
Observe that there is no finite output-dimension term in~\eqref{eq:partial-range}.

\subsection{Evaluating the decoder test on the balanced-channel state}

Let the pure encoded state have total photon number at most $L$.
Its receiver state $\rho_{RB^n}\coloneqq\Tr_{E^n}\proj{\psi}$ is supported on
$I_R\otimes\Pi_L^B$, where $\Pi_L^B$ is~\eqref{eq:cutoff} on $B^n$.
Define
\begin{equation}
 Q\coloneqq(\id_R\otimes\cD^\dag )(\Phi_M),\qquad
 Q_L\coloneqq(I_R\otimes\Pi_L^B)Q(I_R\otimes\Pi_L^B).
 \label{eq:Q}
\end{equation}
Complete positivity and unitality of $\cD^\dag $ give $0\leq Q\leq I$.
The Bell marginal is $\Tr_R\Phi_M=I_{\widehat R}/M$, so
\begin{equation}
 \Tr_R Q=\cD^\dag (I_{\widehat R}/M)=I_{B^n}/M,
 \qquad \Tr_R Q_L=\Pi_L^B/M.
 \label{eq:Q-marginal}
\end{equation}
By the support of the receiver state and the definition of the adjoint,
the code fidelity $F$ can be expressed as
\begin{equation}
F=\Tr\rho_{RB^n}Q=\Tr\rho_{RB^n}Q_L.
\end{equation}
Only the effect is compressed in~\eqref{eq:Q}; the physical channel has
not been modified.

To evaluate the same decoder test on $RE'^nF^n$, conjugate $Q_L$ by
the degrading isometry:
\begin{equation}
 T\coloneqq(I_R\otimes W)Q_L(I_R\otimes W^\dagger),
 \qquad P_L\coloneqq W\Pi_L^BW^\dagger.
 \label{eq:T}
\end{equation}
Since $W$ is an isometry, $Q_L\leq I_R\otimes\Pi_L^B$, and
\eqref{eq:Q-marginal} holds, the effect in~\eqref{eq:T} satisfies
\begin{equation}
 0\leq T\leq I_R\otimes P_L\leq I,
 \qquad \Tr_R T=P_L/M.
 \label{eq:T-marginal}
\end{equation}
The marginal on $RE'^nF^n$ is
$(I_R\otimes W)\rho_{RB^n}(I_R\otimes W^\dagger)$.
Using $W^\dagger W=I$, isometric invariance of the trace gives
\begin{equation}
 \Tr\zeta_{RF^nE'^n}T=\Tr\rho_{RB^n}Q_L=F.
 \label{eq:F-transport}
\end{equation}
System ordering in~\eqref{eq:F-transport} only permutes tensor factors.

Set $\sigma_{E'^n}\coloneqq\zeta_{E'^n}$. For the balanced-channel state, apply
Corollary~\ref{cor:balanced} with its reference system replaced by the
composite system $RF^n$. Its reference Hilbert space may be restricted to
$\mathcal H_R\otimes\ran\Pi_L^F$, where $\Pi_L^F$ is the cutoff
\eqref{eq:cutoff} on $F^n$, since both the state and the effect $T$ have at most $L$ photons in $F^n$. The reference identity in the
testing inequality is therefore finite dimensional. It can equally be written as
$I_R\otimes I_{F^n}$ inside the following trace because $T$ has the
specified finite support. By~\eqref{eq:T-marginal} and
\eqref{eq:partial-range},
\begin{align}
\beta&\coloneqq\Tr[(I_{RF^n}\otimes\sigma_{E'^n})T] \label{eq:comparison-step-1}\\
&=\frac1M\Tr[(I_{F^n}\otimes\sigma_{E'^n})P_L] \label{eq:comparison-step-2}\\
&=\frac1M\Tr\!\left[\sigma_{E'^n}\Tr_{F^n}P_L\right] \label{eq:comparison-step-3}\\
&=\frac1M\Tr\!\left[\sigma_{E'^n}\cL_\tau^{\otimes n}(\Pi_L^B)\right] \label{eq:comparison-step-4}\\
&\leq\frac{\tau^{-n}}M\Tr\sigma_{E'^n}
 =\frac{\tau^{-n}}M. \label{eq:comparison}
\end{align}
All traces in~\eqref{eq:comparison-step-1}--\eqref{eq:comparison} are finite. If the testing lemma is
implemented after additionally compressing to $\supp\sigma_{E'^n}$,
the fidelity and comparison trace remain unchanged. The marginal
identity~\eqref{eq:T-marginal} is used before this optional additional
compression; no such identity is needed for the further compressed
effect.

The effect $T$ need not be a decoder effect for transmitting the entire
reference $RF^n$. It is simply an effect to which quantum hypothesis
testing applies. Eq.~\eqref{eq:comparison} is what accounts for
its access to $F^n$. Tracing over the original logical register gives
$P_L/M$, not an identity on all of $E'^nF^n$; the subsequent trace over
$F^n$ produces $\tau^{-n}$ rather than a Hilbert-space dimension.

Apply Corollary~\ref{cor:balanced} to the state $\zeta$ with reference $RF^n$ and
receiver $E'^n$. Eqs.~\eqref{eq:F-transport} and~\eqref{eq:comparison}
imply
\begin{equation}
 F\leq\frac{4n}{4n+a^2}+e^a\frac{\tau^{-n}}M
 =\frac{4n}{4n+a^2}
  +\exp\!\left(a+nq_\eta\ln2-\ln M\right).
 \label{eq:finite-bound}
\end{equation}
The equality uses $-\ln\tau=q_\eta\ln2$ for $\eta\geq1/2$.
Setting $a\coloneqq b\ln2$ gives~\eqref{eq:main-free} for pure inputs
with finite photon support and $1/2\leq\eta<1$.

\section{Completion of the proof for arbitrary codes}\label{sec:completion}

\subsection{Smaller transmissivities}
Vacuum attenuators obey the cascade law
\begin{equation}
 \cL_s\circ\cL_t=\cL_{st}.
 \label{eq:cascade}
\end{equation}
To verify~\eqref{eq:cascade}, use independent vacuum inputs $e_1$ and $e_2$.
After two beam splitters the receiver annihilation operator is
\begin{equation}
 b=\sqrt{st}\,a+\sqrt{s(1-t)}\,e_1+\sqrt{1-s}\,e_2.
 \label{eq:cascade-mode}
\end{equation}
The two vacuum coefficients have squared sum
$s(1-t)+(1-s)=1-st$. If $st<1$, their normalized linear combination
is a canonical vacuum mode, because an independent vacuum is invariant
under a passive change of mode coordinates. Thus $b$ is the output of
a pure-loss channel of transmissivity $st$. If $st=1$, both channels
are identities, so that the same conclusion holds.

For $0\leq\eta\leq1/2$, Eq.~\eqref{eq:cascade} gives
$\cL_\eta=\cL_{2\eta}\circ\cL_{1/2}$.
Compose the decoder with $\cL_{2\eta}^{\otimes n}$. The original code's
fidelity is then the fidelity of this composite decoder after a
balanced channel. Apply~\eqref{eq:finite-bound} at $\eta=1/2$, for which
$\tau=1$ and $q_\eta=0$. This proves~\eqref{eq:main-free} for the whole
range $0\leq\eta<1$, still for pure inputs with finite photon support.

\subsection{Removing the photon cutoff for each fixed pure input}
Fix $n$, $M$, the decoder, and an arbitrary normalized pure input
$\ket{\phi}_{RA^n}$. Let $\Pi_L^A$ be~\eqref{eq:cutoff} on the input, and set
\begin{equation}
 p_L\coloneqq\langle\phi|I_R\otimes\Pi_L^A|\phi\rangle,
 \qquad
 \ket{\phi_L}\coloneqq
 \frac{(I_R\otimes\Pi_L^A)\ket{\phi}}{\sqrt{p_L}}
 \label{eq:normalized-cutoff}
\end{equation}
whenever $p_L>0$. Strong convergence of the projections implies
$p_L\to1$, so $p_L>0$ eventually. The overlap is
$\langle\phi|\phi_L\rangle=\sqrt{p_L}$. Hence, applying the known equality relating trace distance and fidelity for pure states, we conclude that
\begin{equation}
 \frac12\norm{\proj{\phi}-\proj{\phi_L}}_1
 =\sqrt{1-\abs{\langle\phi|\phi_L\rangle}^2}
 =\sqrt{1-p_L}\longrightarrow0.
 \label{eq:pure-cutoff-distance}
\end{equation}
For this fixed code, define the corresponding input-space test operator by
\begin{equation}
 A_{\rm test}\coloneqq
 \bigl(\id_R\otimes(\cD\circ\cL_\eta^{\otimes n})^\dag\bigr)(\Phi_M).
 \label{eq:input-test}
\end{equation}
Since the adjoint is unital and completely positive,  $0\leq A_{\rm test}\leq I$.
Write $F_L\coloneqq\Tr[A_{\rm test}\proj{\phi_L}]$ for the fidelity
of the truncated input, with the same decoder.
If $Z\coloneqq\proj{\phi_L}-\proj{\phi}$, then $\Tr Z=0$, and its positive and
negative parts each have trace $\norm Z_1/2$. Thus
$-\Tr Z_-\leq\Tr A_{\rm test}Z\leq\Tr Z_+$, giving
\begin{equation}
 |F_L-F|\leq\tfrac12\norm{\proj{\phi_L}-\proj{\phi}}_1
 =\sqrt{1-p_L}\longrightarrow0.
 \label{eq:fidelity-cutoff-limit}
\end{equation}
This proves convergence using only a bounded effect, without a continuity
claim for an unbounded logarithm.

The right side of~\eqref{eq:main-free} is independent of $L$.
Consequently its inequality passes to the limit for the fixed code and
every fixed $b>0$. This proves~\eqref{eq:main-free} for arbitrary 
pure input states, including inputs with infinite mean photon number. There
is no need to pass a logarithm, variance, or entropy through this limit.
The cutoff is removed before considering an asymptotic sequence of
blocklengths.

\subsection{Mixed inputs and arbitrary encoding channels}
A density operator on the separable space $R\otimes A^n$ is
positive semidefinite and trace class, and hence has a countable spectral decomposition
\begin{equation}
 \rho_{RA^n}=\sum_j w_j\proj{\phi_j},
 \qquad w_j\geq0,\quad\sum_jw_j=1.
 \label{eq:mixed-decomposition}
\end{equation}
Every vector uses the same logical register $R$ of dimension $M$.
For a fixed decoder the fidelity is linear, and trace-norm convergence
of the positive spectral sums implies
\begin{equation}
 F(\rho)=\sum_jw_jF(\phi_j).
 \label{eq:linear-fidelity}
\end{equation}
Each term satisfies the same right side of~\eqref{eq:main-free}, so their
convex combination does too. This proves the theorem for every normal
encoded state. In particular, it covers the state generated by every
CPTP encoder acting on one half of $\Phi_M$. No purifying system has been
added to the logical register and no purity or flatness condition on
$\rho_R$ was imposed.

\subsection{Evaluation at the rate gap}
We have proved~\eqref{eq:main-free} for every code. Recall that
$\delta_n=\log_2M-nq_\eta$, as defined in~\eqref{eq:rate-gap}.
If $\delta_n>0$, set $b\coloneqq\delta_n/2$. Then
\begin{equation}
 \frac{4n}{4n+b^2(\ln2)^2}
 =\frac{16n}{16n+\delta_n^2(\ln2)^2},
 \qquad
 2^{b+nq_\eta-\log_2M}=2^{-\delta_n/2}.
 \label{eq:gap-substitution}
\end{equation}
This proves~\eqref{eq:main-gap}. If
$\log_2M\geq n(q_\eta+\Delta)$, then $\delta_n\geq n\Delta$.
Both terms on the right side of~\eqref{eq:main-gap} decrease as the
positive variable $\delta_n$ increases. Replacing it by $n\Delta$ and
dividing the first numerator and denominator by $n$ gives
\eqref{eq:main-rate}. This completes the proof of Theorem~\ref{thm:main}.

\section{Conclusion and future research}\label{sec:conclusion}

We have proved a strong converse for the unconstrained quantum capacity
of the pure-loss bosonic channel. The finite-blocklength estimate in
Theorem~\ref{thm:main} applies to every encoded state and
joint decoder, including states with infinite mean photon number.
At each fixed positive rate gap, it forces the entanglement-generation
fidelity to zero as $O(1/n)$. The result concerns unassisted quantum
communication; it does not impose Gaussianity or independence on the
encoded state.

The proof separates the capacity threshold from the fluctuations around
it. For a balanced-channel output, receiver-environment symmetry forces
the relevant relative entropy to vanish. The vacuum of the dark modes, its
photon-number gap, and a one-photon partial-trace estimate bound the
related variance by $4n$. Quantum Chebyshev testing, in hockey-stick form, converts
this uniform fluctuation estimate into an operational bound. For general
transmissivity, evaluating the decoder test on the balanced-channel state
bounds its comparison weight by $\tau^{-n}/M$. Its logarithmic
contribution is exactly the capacity term. The identity
$\cL_\tau(I)=\tau^{-1}I$ from 
\cite{IvanSabapathySimon2011} accounts for the
auxiliary bosonic output without introducing its Hilbert-space dimension.

The treatment of infinite dimension is also essential to the conclusion.
Finite photon support is used only to justify the operator calculations
for each fixed code. The resulting fidelity estimate is independent of
the support cutoff, so its removal requires convergence of a bounded
measurement probability, not continuity of an unbounded logarithm or an
interchange of a cutoff limit with a coding limit.

Several questions follow from this analysis.

\paragraph{Exponential decay above capacity.}
The present estimate proves polynomial decay of the fidelity. It does
not establish an exponential strong-converse exponent for the
unconstrained bosonic channel. One possible direction is to replace the present second-moment estimate with a stronger bound on higher or exponential moments. Such a result could lead to an exponential strong converse. The main difficulty is noncommutativity: higher-order quantities involving \(\ln\rho\) and \(\ln\sigma\) cannot in general be manipulated as though they were classical random variables. It would therefore be necessary to identify an appropriate noncommutative moment quantity and prove a bound that holds uniformly over all codes. Such control could support a
R\'enyi-based refinement of the testing argument. A variance bound alone
does not supply that exponential-moment estimate, so this step requires
additional analysis rather than a change of the free parameter in
Theorem~\ref{thm:main}.

\paragraph{Finite-blocklength behavior near capacity.}
Eq.~\eqref{eq:main-gap} already implies vanishing fidelity whenever
the positive rate gap satisfies $\delta_n/\sqrt n\to\infty$: then
$\delta_n^2/n\to\infty$ and $\delta_n\to\infty$, so both terms vanish.
It remains to determine the optimal behavior when $\delta_n$ is of order
$\sqrt n$, to obtain matching achievable bounds, and to assess whether
the constant in the uniform variance estimate can be improved.
The present argument neither identifies a dispersion coefficient nor
proves a sharp second-order expansion.

\paragraph{Energy-constrained thresholds.}
A separate question is the strong converse at a finite-energy
quantum-capacity threshold~\cite{WildeQi2018}. Such a result must specify
its code constraint carefully: a mean-photon-number constraint and a
photon-number occupation constraint are not interchangeable, as emphasized
in optical strong-converse work~\cite{WildeWinter2014,Bardhan2015}.
An energy-dependent refinement would have to retain the relevant
constraint when bounding the decoder test's comparison trace or the
Gaussian fluctuation, rather than rely solely on the energy-independent
bound $\tau^{-n}$.

\paragraph{Other Gaussian channels.}
It is natural to ask which parts of this fluctuation-based argument
survive for thermal-noise attenuators, noisy amplifiers, or more general
multimode Gaussian channels. The strong converse for the quantum-limited
amplifier is already known
\cite[Theorem~24 and Corollary~25]{WildeTomamichelBerta2017}; in that case,
the question is whether the present physical fluctuation estimates give
a useful alternative quantitative bound, not whether the strong-converse
property itself remains open. An extension to noisy channels would need
a replacement for the exact balanced-output
symmetry and the dark-vacuum estimate, together with a comparison-trace
bound appropriate to the channel. Those are model-dependent requirements;
the present proof does not establish them. More broadly, the argument
suggests examining when constraints on a channel's Stinespring range
produce code-uniform logarithmic fluctuation bounds that can be used in
quantum hypothesis-testing converses.

\section*{Acknowledgements}

I acknowledge Ludovico Lami and Andreas Winter for discussions about this problem going back many years and Hami Mehrabi for more recent discussions. I acknowledge support from the Cornell University School of Electrical and Computer Engineering.

\section*{Statement on AI-assisted preparation}

OpenAI's ChatGPT Pro 6 Astra was used extensively in developing this manuscript,
including proposing proof strategies and mathematical arguments,
assisting with literature searches, reviewing the proofs, revising the
exposition, and preparing \LaTeX{} code and the manuscript. The author did not provide the
initial proof methods in advance. The author revised the manuscript and is responsible for verifying
the mathematical arguments, calculations, citations, and conclusions,
and for the final content of the manuscript.

\bibliographystyle{alpha}
\begingroup
\raggedright
\bibliography{pure_loss_strong_converse}
\endgroup

\appendix
\section{Proof of the logarithmic trace inequality}\label{app:trace}

We give a direct proof of Lemma~\ref{lem:log}. The exact $A$-weighted
logarithmic form appears explicitly in Sharma and Warsi
\cite[Lemma~5, Eq.~(A11)]{SharmaWarsi2012}, whose proof builds on the
related $B$-weighted trace inequalities discussed in
\cite[Lemma~1]{Audenaert2007} and
\cite[Theorem~11.18]{Petz2008}.
Only a scalar trace inequality is claimed; neither the weighted product
nor a logarithm compressed by the difference projector is assumed
positive as an operator.

\step{1. A strictly positive interpolation path.}
First suppose $A>0$ and $B>0$. Put
\begin{equation}
 \Delta\coloneqq A-B=\Delta_+-\Delta_-,\qquad
 P\coloneqq\{\Delta>0\}.
 \label{eq:appendix-difference}
\end{equation}
Then $P\Delta=\Delta P=\Delta_+$ and $P\Delta_-=0$.
For $0\leq s\leq1$, set $Z_s\coloneqq B+s\Delta=(1-s)B+sA$.
There is $m>0$ with $A,B\geq mI$, and thus $Z_s\geq mI$ uniformly in $s$.

\step{2. Differentiate the logarithm under an integral.}
For $z>0$, direct integration gives, as $b\to\infty$,
\begin{equation}
 \int_0^b\left(\frac1{1+x}-\frac1{z+x}\right)dx
 =\ln(1+b)-\ln(z+b)+\ln z\longrightarrow\ln z.
 \label{eq:scalar-log-integral}
\end{equation}
The finite-dimensional spectral theorem therefore implies
\begin{equation}
 \ln Z_s=\int_0^\infty
 \left[\frac1{1+x}I-(Z_s+xI)^{-1}\right]dx.
 \label{eq:matrix-log-integral}
\end{equation}
Differentiating $(Z_s+xI)(Z_s+xI)^{-1}=I$ gives
\begin{equation}
 \frac{d}{ds}(Z_s+xI)^{-1}
 =-(Z_s+xI)^{-1}\Delta(Z_s+xI)^{-1}.
 \label{eq:inverse-derivative}
\end{equation}
The norm of this derivative is at most
$\norm\Delta_\infty(m+x)^{-2}$, which is integrable in $x$ and uniform
in $s$. Differentiation under the integral, taking traces, and integration
in $s$ are therefore justified by this bound. Writing
$G_{s,x}\coloneqq(Z_s+xI)^{-1}$, we obtain
\begin{equation}
 \Tr[PA(\ln A-\ln B)]
 =\int_0^1\!\int_0^\infty
 \Tr[PA G_{s,x}\Delta G_{s,x}]\,dx\,ds.
 \label{eq:trace-integral}
\end{equation}

\step{3. Every integrand has a nonnegative decomposition.}
Fix $s\in[0,1]$ and $x>0$, and abbreviate $G\coloneqq G_{s,x}$.
Since $A=Z_s+(1-s)\Delta$, we have
\begin{equation}
 A=G^{-1}-xI+(1-s)\Delta,
 \qquad
 GAG=G-xG^2+(1-s)G\Delta G.
 \label{eq:resolvent-identities}
\end{equation}
Substitution into the integrand and $P\Delta=\Delta_+$ give
\begin{align}
\Tr[PAG\Delta G]
 &=\Tr[P\Delta G]-x\Tr[PG\Delta G]
      +(1-s)\Tr[P\Delta G\Delta G] \label{eq:resolvent-expansion-1}\\
&=\Tr[\Delta_+G]-x\Tr[PG\Delta G]
      +(1-s)\Tr[\Delta_+G\Delta G] \label{eq:resolvent-expansion-2}\\
&=\Tr[\Delta_+GAG]
      +x\Tr[\Delta_+G^2]-x\Tr[PG\Delta G]. \label{eq:resolvent-expansion-3}
\end{align}
The last equality uses the displayed formula for $GAG$.
Next, split $\Delta$ into its positive and negative parts and use cyclicity:
\begin{align}
\Tr[\Delta_+G^2]-\Tr[PG\Delta G]
 &=\Tr[G\Delta_+G]-\Tr[PG\Delta_+G]+\Tr[PG\Delta_-G] \label{eq:positive-negative-split-1}\\
&=\Tr[(I-P)G\Delta_+G]+\Tr[PG\Delta_-G]. \label{eq:positive-negative-split-2}
\end{align}
Combining these identities yields
\begin{equation}
 \Tr[PAG\Delta G]
 =\Tr[\Delta_+GAG]
  +x\Tr[(I-P)G\Delta_+G]
  +x\Tr[PG\Delta_-G]\geq0.
 \label{eq:positive-resolvent-decomposition}
\end{equation}
For example, $GAG\geq0$ and $\Delta_+\geq0$ imply
$\Tr\Delta_+GAG=\Tr\Delta_+^{1/2}GAG\Delta_+^{1/2}\geq0$.
For the other two terms, $G\Delta_+G\geq0$ and $G\Delta_-G\geq0$,
and the projections $I-P$ and $P$ are positive as well.
All three traces are consequently real and nonnegative. Integration
in~\eqref{eq:trace-integral} proves~\eqref{eq:log-trace} for faithful $A,B$.

\step{4. A singular first argument.}
For $A\geq0$, $B>0$, define
$A_\varepsilon\coloneqq A+\varepsilon I$ and
$B_\varepsilon\coloneqq B+\varepsilon I$ for $\varepsilon>0$.
Their difference is still $A-B$, so their strictly positive-difference
projector is the same $P$. The faithful case gives
\begin{equation}
 \Tr[P A_\varepsilon(\ln A_\varepsilon-\ln B_\varepsilon)]\geq0.
 \label{eq:faithful-regularization}
\end{equation}
The function $x\ln x$, extended by zero at $x=0$, is continuous on a
compact interval containing all eigenvalues under consideration.
Thus $A_\varepsilon\ln A_\varepsilon\to A\ln A$ in norm.
Since $B>0$, also $\ln B_\varepsilon\to\ln B$ and
$A_\varepsilon\ln B_\varepsilon\to A\ln B$ in norm.
Taking the trace and letting $\varepsilon\downarrow0$ proves
Lemma~\ref{lem:log}, including the reality assertion.\hfill$\square$

\section{Receiver--environment exchange as dark-mode parity}
\label{app:dark-parity}

This appendix verifies explicitly the identity stated in \eqref{eq:swap-dark-parity}:
\begin{equation}
 S_{BE}=(-1)^{N_{\rm dk}},
 \label{eq:swap-dark-parity-appendix}
\end{equation}
where $S_{BE}$ is the unitary that exchanges the receiver and environment
modes and
\begin{equation}
 N_{\rm dk}
 \coloneqq
 \sum_{i=1}^n d_i^\dagger d_i
 \label{eq:dark-number-appendix}
\end{equation}
is the total dark-mode photon-number operator. We use the spectral-calculus
definition $(-1)^{N_{\rm dk}}\coloneqq e^{i\pi N_{\rm dk}}$.

Recall the bright and dark creation operators
\begin{equation}
 a_{{\rm br},i}^\dagger
 \coloneqq
 \frac{b_i^\dagger+e_i^\dagger}{\sqrt{2}},
 \qquad
 d_i^\dagger
 \coloneqq
 \frac{b_i^\dagger-e_i^\dagger}{\sqrt{2}}.
 \label{eq:bright-dark-creators-appendix}
\end{equation}
By definition, receiver--environment exchange satisfies
\begin{equation}
 S_{BE}b_i^\dagger S_{BE}^\dagger=e_i^\dagger,
 \qquad
 S_{BE}e_i^\dagger S_{BE}^\dagger=b_i^\dagger.
 \label{eq:swap-creators-appendix}
\end{equation}
It follows immediately that
\begin{align}
 S_{BE}a_{{\rm br},i}^\dagger S_{BE}^\dagger
 &=
 \frac{e_i^\dagger+b_i^\dagger}{\sqrt{2}}
 =
 a_{{\rm br},i}^\dagger,
 \label{eq:swap-bright-appendix}\\
 S_{BE}d_i^\dagger S_{BE}^\dagger
 &=
 \frac{e_i^\dagger-b_i^\dagger}{\sqrt{2}}
 =
 -d_i^\dagger.
 \label{eq:swap-dark-appendix}
\end{align}
Thus the swap leaves every bright-mode creation operator invariant and
introduces a factor of $-1$ for every dark-mode creation operator.

To make the resulting parity relation explicit, consider a bright--dark
Fock basis vector
\begin{equation}
 \ket{\mathbf{m}}_{\rm br}\ket{\mathbf{r}}_{\rm dk}
 \coloneqq
 \left[
 \prod_{i=1}^n
 \frac{(a_{{\rm br},i}^\dagger)^{m_i}}{\sqrt{m_i!}}
 \right]
 \left[
 \prod_{i=1}^n
 \frac{(d_i^\dagger)^{r_i}}{\sqrt{r_i!}}
 \right]
 \ket{0}_{\rm br}\ket{0}_{\rm dk},
 \label{eq:bright-dark-fock-appendix}
\end{equation}
where
\begin{equation}
 \mathbf{m}\coloneqq(m_1,\ldots,m_n),
 \qquad
 \mathbf{r}\coloneqq(r_1,\ldots,r_n),
 \qquad \mathbf{m},\mathbf{r}\in\mathbb{N}_0^n.
 \label{eq:fock-indices-appendix}
\end{equation}
Here $\mathbb{N}_0\coloneqq\{0,1,2,\ldots\}$, and each vacuum ket in
\eqref{eq:bright-dark-fock-appendix} denotes the joint vacuum of $n$ modes.
The joint vacuum is invariant under receiver--environment exchange.
Using~\eqref{eq:swap-bright-appendix}--\eqref{eq:swap-dark-appendix},
\begin{align}
 S_{BE}
 \ket{\mathbf{m}}_{\rm br}\ket{\mathbf{r}}_{\rm dk}
 &=
 \left[
 \prod_{i=1}^n
 \frac{(a_{{\rm br},i}^\dagger)^{m_i}}{\sqrt{m_i!}}
 \right]
 \left[
 \prod_{i=1}^n
 \frac{(-d_i^\dagger)^{r_i}}{\sqrt{r_i!}}
 \right]
 \ket{0}_{\rm br}\ket{0}_{\rm dk}
 \label{eq:swap-fock-step-1}\\
 &\quad=
 (-1)^{\sum_{i=1}^n r_i}
 \ket{\mathbf{m}}_{\rm br}\ket{\mathbf{r}}_{\rm dk}.
 \label{eq:swap-fock-step-2}
\end{align}

On the other hand, by definition of the total dark-mode number operator,
\begin{equation}
 N_{\rm dk}
 \ket{\mathbf{m}}_{\rm br}\ket{\mathbf{r}}_{\rm dk}
 =
 \left(\sum_{i=1}^n r_i\right)
 \ket{\mathbf{m}}_{\rm br}\ket{\mathbf{r}}_{\rm dk}.
 \label{eq:dark-number-fock-appendix}
\end{equation}
Functional calculus therefore gives
\begin{equation}
 (-1)^{N_{\rm dk}}
 \ket{\mathbf{m}}_{\rm br}\ket{\mathbf{r}}_{\rm dk}
 =
 (-1)^{\sum_{i=1}^n r_i}
 \ket{\mathbf{m}}_{\rm br}\ket{\mathbf{r}}_{\rm dk}.
 \label{eq:dark-parity-fock-appendix}
\end{equation}
Comparing~\eqref{eq:swap-fock-step-2} and
\eqref{eq:dark-parity-fock-appendix}, the operators $S_{BE}$ and
$(-1)^{N_{\rm dk}}$ agree on every bright--dark Fock basis vector.
Both operators are bounded unitaries. Since these vectors form a complete
orthonormal basis of the joint $B^nE^n$ Fock space, equality on their
linear span extends by continuity to the whole space. It follows that
\begin{equation}
 S_{BE}=(-1)^{N_{\rm dk}}.
 \label{eq:swap-equals-dark-parity}
\end{equation}

Thus receiver--environment exchange is precisely total dark-mode
photon-number parity: vectors containing an even number of dark photons are
invariant under the swap, while vectors containing an odd number of dark
photons acquire a minus sign.

\end{document}